\newcommand{\N}{{\mathbb N}}
\newcommand{\Z}{{\mathbb Z}}
\newcommand{\wrt}{w.r.t. }
\newcommand{\ie}{i.e. }
\newcommand{\resp}{\emph{resp.} }
\newcommand{\id}[1][]{\mathrm{id}_{#1}}
\newcommand{\aut}[1]{{\mathcal #1}}
\newcommand{\dual}[1]{{\mathfrak d}({#1})}
\newcommand{\mot}[1]{{\mathbf {#1}}}
\newcommand{\pres}[1]{\langle{#1}\rangle}
\newcommand{\presm}[1]{\pres{{#1}}_{+}}
\newcommand{\mz}{\mathfrak m}
\newcommand{\dz}{\mathfrak d}
\newcommand{\clot}[1]{{\mathfrak c}(\aut{#1})}
\newcommand{\clotx}[1]{\llbracket{#1}\rrbracket}
\newcommand{\portraitA}[2][\infty]{{\mathfrak{p}}_{#1}({#2})}
\newcommand{\portrait}[2][\infty]{{\mathfrak{p}}_{#1}\llbracket{#2}\rrbracket}
\newcommand{\I}[1]{{\cal I}_{#1}}
\newcommand{\J}[2][{\cal J}]{{{#1}}\lfloor{#2}\rfloor}
\newcommand{\toto}{\portrait[k]{s}\lfloor{\tau_1,\tau_2}\rfloor}
\newcommand{\GAP}{{\sf GAP}\xspace}
\theoremstyle{plain}
\newtheorem{proposition}[theorem]{Proposition}
\newtheorem{myremark}[theorem]{Remark}
\title{
The finiteness of a group generated by a 2-letter
invertible-reversible Mealy automaton is decidable}
\titlerunning{The finiteness of a 2-letter invertible-reversible automaton group
  is decidable}
\author{Ines Klimann\footnote{The author is partially supported by ANR Project MealyM ANR-JCJC-12-JS02-012-01.}
}
\authorrunning{Ines Klimann}
\affil{Univ Paris Diderot, Sorbonne Paris Cit\'e, LIAFA,\\
    UMR 7089 CNRS, F-75013 Paris,
    France\\\texttt{klimann@liafa.univ-paris-diderot.fr}}
\subjclass{F.4.3}
\keywords{Mealy automata, automaton semigroups, decidability of
finiteness, decidability of freeness, Nerode equivalence}
\begin{document}

  \setlength\jot{0pt}
  \setlength\abovedisplayskip{7pt}
  \setlength\belowdisplayskip{7pt}

\maketitle

\begin{abstract}
We prove that a semigroup generated by a reversible two-state Mealy
automaton is either finite or free of rank~2. This fact leads to the
decidability of finiteness for groups generated by two-state or
two-letter invertible-reversible Mealy automata and to the
decidability of freeness for semigroups generated by two-state
invertible-reversible Mealy automata.
\end{abstract}

\section{Introduction}
\bigskip\emph{Automaton (semi)groups} --- short for semigroups
generated by Mealy automata and groups generated by invertible Mealy
automata ---
were formally introduced half a century ago (for details,
see~\cite{gns,cain} and references therein).
Over the years, important results have started revealing their full
potential, by 
contributing to important conjectures in group theory, as Milnor
problem (first example of a group of intermediate growth) or Burnside
problem (example of a very simple Mealy automaton generating an infinite
torsion group).

In a way, semigroups can be classified according to their growth
function: at one end stand finite semigroups and at the other one
free semigroups. 
Several sufficient or necessary criteria for finiteness of automaton
semigroups
exist~\cite{AKLMP12,KMP12,cain,mal,min,sst,anto,sidkiconjugacy,sidki}, 
but deciding finiteness of such semigroups is still an open problem. As to
freeness, it has been and it is still a
challenge: only some particular invertible Mealy automata, possibly
parametrized, have been shown to generate free
groups~\cite{svv,GM05,nek,VV07,VV10}; and some Cayley automaton
semigroups have been shown to be free~\cite{sst}.

In this paper, we link both issues for semigroups generated by
reversible two-state Mealy automata: we prove that such semigroups are
either finite or free, in this latter case the states of the
generating Mealy automaton being free generators of the semigroup,
answering a conjecture stated in~\cite{KMP12}.
On the basis of this dichotomy between finite and free semigroups, we
prove that finiteness and freeness of the semigroup are decidable if
the generating reversible two-state Mealy automaton is also invertible.
Decidability of finiteness extends by duality to groups generated by
two-letter invertible-reversible Mealy automata. The problems of
deciding finiteness or freeness of automaton semigroups was raised by
Grigorchuk, Nekrashevych, and Sushchanskii~\cite[Problem~7.2.1(b)]{gns}.

Specializing to two letters or states may seem to be a strong
restriction, but most of 
the significant examples in literature have faced this
restriction: the first example of a finitely generated group of intermediate
growth, the Grigorchuk group~\cite{grigorchuk1,gns}, is generated by a
two-letter Mealy automaton while the very smallest Mealy automaton
with intermediate growth~\cite{brs} has two letters and two states;
the lamplighter group~\cite{gz} is generated by
a two-letter and two-state Mealy automaton; the Aleshin 
automaton~\cite{aleshin,VV07} gives the simplest example of a free
automaton group and has two letters. The article~\cite{clas32} is
entirely devoted to the study of groups generated by 3-state 2-letter
invertible Mealy automata.

This paper is organized as follows.
In Section~\ref{sec-sgp} we define Mealy automata and
automaton (semi)groups. Basic tools to manipulate them are
introduced in Section~\ref{sec-tools}. Section~\ref{sec-main} is
devoted to the dichotomy between free and finite semigroups. The
decidability results are proved in
Section~\ref{sec-decidability}.
The cornerstone of our proofs and
constructions is the very classical Nerode equivalence used to minimize
automata.

\section{(Semi)groups generated by Mealy automata}\label{sec-sgp}
\subsection{Mealy automata}
If one forgets initial and final states, a {\em
(finite, deterministic, and complete) automaton} $\aut{A}$ is a
triple 
\(
\bigl( A,\Sigma,\delta = (\delta_i: A\rightarrow A )_{i\in \Sigma} \bigr)
\),
where the \emph{stateset}~$A$
and the \emph{alphabet}~$\Sigma$ are non-empty finite sets, and
where the $\delta_i$ 
are functions.

\smallskip

A \emph{Mealy automaton} is a quadruple 
%\[
\(\bigl( A, \Sigma, \delta = (\delta_i: A\rightarrow A )_{i\in \Sigma},
\rho = (\rho_x: \Sigma\rightarrow \Sigma  )_{x\in A} \bigr)\), %\:,
%\]
such that both $(A,\Sigma,\delta)$ and $(\Sigma,A,\rho)$ are
automata. 
In other terms, a Mealy automaton is a letter-to-letter transducer
with the same input and output alphabet. 

The graphical representation of a Mealy automaton is
standard, see Figure~\ref{fig-Aleshin-babyAleshin}.

\begin{figure}[h]
\centering
\vspace*{-.7cm}
\subfloat[][The trivial aut.]{
\begin{tikzpicture}[->,>=latex]
\tikzstyle{every state}=[minimum size=12pt,inner sep=0pt]
\node[state] (x) {\(x\)};
\node (so) [left of=x] {};
\node (ne) [right of=x] {};
\node (ee) [right of=ne] {};
\node (oo) [left of=so] {};
\path (x) edge [loop] node[above]{\(a|a\)} (x);
\end{tikzpicture}
}
\quad\subfloat[][The Aleshin automaton.]{
\begin{tikzpicture}[->,>=latex,node distance=1.8cm]
\tikzstyle{every state}=[minimum size=12pt,inner sep=0pt]
\node[state] (z) {\(z\)};
\node[state] (y) [below right of=z] {\(y\)};
\node[state] (x) [above right of=y] {\(x\)};
\node (so) [left of=y] {};
\node (ne) [right of=y] {};
\path (x) edge node[below,near start]{\(b|a\)} (y)
      (x) edge [bend left] node[above]{\(a|b\)} (z)
      (z) edge [bend left] node[above]{\(a|a\), \(b|b\)} (x)
      (y) edge node[below,near end]{\(b|a\)} (z)
      (y) edge [loop right] node{\(a|b\)} (y);
\end{tikzpicture}}\qquad
\quad\subfloat[][The Baby-Aleshin aut.]{
\begin{tikzpicture}[->,>=latex,node distance=1.8cm]
\tikzstyle{every state}=[minimum size=12pt,inner sep=0pt]
\node[state] (x) {\(x\)};
\node[state] (y) [below right of=x] {\(y\)};
\node[state] (z) [above right of=y] {\(z\)};
\node (so) [left of=y] {};
\node (ne) [right of=y] {};
\path (y) edge node[below,near end]{\(a|a\)} (x)
      (x) edge [bend left] node[above]{\(a|b\), \(b|a\)} (z)
      (z) edge [bend left] node[above]{\(b|b\)} (x)
      (z) edge node[below,near start]{\(a|a\)} (y)
      (y) edge [loop right] node{\(b|b\)} (y);
\end{tikzpicture}}
\vspace*{-.3cm}
\caption{Examples of Mealy automata: the Aleshin automaton generates
  the rank 3 free group~\cite{aleshin,VV07}, the Baby-Aleshin automaton
  generates the free product \(\Z_2^{*3}=\Z_2*\Z_2*\Z_2\)~\cite{nek}.}\label{fig-Aleshin-babyAleshin}
\end{figure}
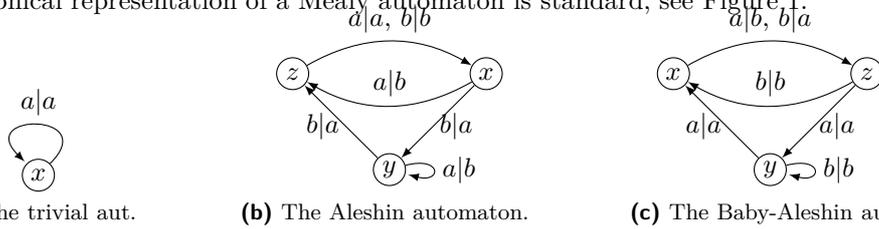

A Mealy automaton \(\aut{A}=(A,\Sigma,\delta, \rho)\) is
\emph{invertible\/} if the functions \(\rho_x\) are permutations of
\(\Sigma\) and \emph{reversible\/} if the functions \(\delta_i\) are
permutations of \(A\).

In a Mealy automaton $\aut{A}=(A,\Sigma, \delta, \rho)$, the sets $A$ and
$\Sigma$ play dual roles. So we may consider the \emph{dual (Mealy)
  automaton} defined by
\(
\dz(\aut{A}) = (\Sigma,A, \rho, \delta)
\).
Obviously, a Mealy automaton is reversible if and only if its dual is
invertible.

Considering the underlying graph of a Mealy automaton, it makes sense
to look at the connected components of a Mealy automaton.
Note that a connected component of a reversible Mealy automaton is
always strongly connected: its \((\delta_i:A\to A)_{i\in\Sigma}\) are
permutations of a finite set and in particular they are surjective.

\subsection{Automaton (semi)groups}\label{sec-semigroups}
Let $\aut{A} = (A,\Sigma, \delta,\rho)$ be a Mealy automaton. 
We view $\aut{A}$ as an automaton with an input and an output tape, thus
defining mappings from input words over~$\Sigma$ to output words
over~$\Sigma$. 
Formally, for $x\in A$, the map
$\rho_x : \Sigma^* \rightarrow \Sigma^*$,
extending~$\rho_x : \Sigma \rightarrow \Sigma$, is defined by:
\begin{equation*}
\forall i \in \Sigma, \ \forall \mot{s} \in \Sigma^*, \qquad
\rho_x(i\mot{s}) = \rho_x(i)\rho_{\delta_i(x)}(\mot{s}) \:.
\end{equation*} 
By convention, the image of the empty word is itself.
The mapping~$\rho_x$ is length-preserving and prefix-preserving.
We say that $\rho_x$ is the \emph{production
function\/} associated with $(\aut{A},x)$ or 
more briefly, if there is no ambiguity,
the \emph{production function\/} of \(x\).
For~$\mot{x}=x_1\cdots x_n \in A^n$ with~$n>0$, set
\(\rho_\mot{x}: \Sigma^* \rightarrow \Sigma^*, \rho_\mot{x} = \rho_{x_n}
\circ \cdots \circ \rho_{x_1} \:\).

Denote dually by $\delta_i:A^*\rightarrow A^*,
i\in \Sigma$, the production functions associated with
the dual automaton
$\dz(\aut{A})$. For~$\mot{s}=s_1\cdots s_n
\in \Sigma^n$ with~$n>0$, set $\delta_\mot{s}: A^* \rightarrow A^*,
\ \delta_\mot{s} = \delta_{s_n}\circ \cdots \circ \delta_{s_1}$. 

\smallskip

The semigroup of mappings from~$\Sigma^*$ to~$\Sigma^*$ generated by
$\rho_x, x\in A$, is called the \emph{semigroup generated
  by~$\aut{A}$} and is denoted by~$\presm{\aut{A}}$.
When $\aut{A}$ is invertible,
its production functions are
permutations on words of the same length and thus we may consider
the group of mappings from~$\Sigma^*$ to~$\Sigma^*$ generated by
$\rho_x, x\in A$; it is called the \emph{group generated
  by~$\aut{A}$} and is denoted by~$\pres{\aut{A}}$.

An invertible Mealy automaton generates a finite group if and only if
it generates a finite semigroup~\cite{AKLMP12}.
A Mealy automaton generates a finite semigroup if and only if so does
its dual~\cite{nek,sv11,AKLMP12}.

\section{Basic tools}\label{sec-tools}
In this section, we present basic tools to manipulate Mealy automata:
Nerode equivalence and minimization of automata 
(\S~\ref{sec-nerode}) are classic constructions from automata
theory, \(\mz\dz\)-reduction and \(\mz\dz\)-triviality
(\S~\ref{sec-mdreduc}) have been introduced in~\cite{AKLMP12} to give
a sufficient condition for finiteness, portraits of automorphisms on
a regular rooted tree (\S~\ref{sec-portraits}) come from geometric
group theory and tensor closures (\S~\ref{sec-synt}) are newly
introduced in order to better control the structure of a Mealy automaton.

\medskip

Let~$\aut{A} = ( A,\Sigma,\delta,\rho)$ be a Mealy automaton.
A convenient and natural operation is to raise \(\aut{A}\)
to the power \(n\), for some \(n>0\): its \emph{\(n\)-th power} is the
Mealy automaton
\begin{equation*}
\aut{A}^n = \bigl( \ A^n,\Sigma, (\delta_i : A^n \rightarrow
A^n)_{i\in \Sigma}, (\rho_{\mot{u}} : \Sigma \rightarrow \Sigma
)_{\mot{u}\in A^n} \ \bigr)\:.
\end{equation*}

Note that the powers of a reversible Mealy automaton are
reversible.

\subsection{Nerode equivalence and minimization of a Mealy automaton}\label{sec-nerode}
{\bf 
Throughout this subsection, $\aut{A}=(A,\Sigma,\delta,\rho)$ denotes a
Mealy automaton.}

The \emph{Nerode equivalence \(\equiv\) on \(A\)} is the limit of the
sequence of increasingly finer equivalences~$(\equiv_k)$ recursively
defined by:
\begin{align*}
\forall x,y\in A,\qquad\qquad x\equiv_0 y & \ \Longleftrightarrow
\ \rho_x=\rho_y\:,\\
\forall k\geqslant 0,\ x\equiv_{k+1} y &
\ \Longleftrightarrow\  \bigl(x\equiv_k y\quad \wedge\quad\forall
i\in\Sigma,\ \delta_i(x)\equiv_k\delta_i(y)\bigr)\:.
\end{align*}
Since the set $A$ is finite, this sequence is ultimately constant;
moreover if two consecutive equivalences are equal, the sequence
remains constant from this point on. The limit is therefore computable.
For every element~$x$ in~$A$, we denote by~$[x]$ (\resp \([x]_k\)) the
class of~$x$ \wrt the Nerode equivalence (\resp the \(\equiv_k\)
equivalence), called 
the \emph{Nerode class\/} (\resp the \emph{\(k\)-class}) of
\(x\). Extending to the \(n\)-th power of \(\aut{A}\), we denote 
by \([\mot{x}]\) the Nerode class in \(A^n\) of
\(\mot{x}\in A^n\).

The \emph{minimization} of $\aut{A}$ is the Mealy automaton
\(\mz(\aut{A})=(A/\mathord{\equiv},\Sigma,\tilde{\delta},\tilde{\rho})\),
where for every $(x,i)$ in $A\times \Sigma$,
$\tilde{\delta}_i([x])=[\delta_i(x)]$ and
$\tilde{\rho}_{[x]}=\rho_x$.
This definition is consistent with the standard minimization of
``deterministic finite automata'' where instead of
considering the mappings $(\rho_x:\Sigma\to\Sigma)_x$, the computation
is initiated by the separation between terminal and non-terminal
states. Using the Hopcroft algorithm, the time complexity of minimization
is \({\cal O}(\Sigma A\log{A})\), see~\cite{ahu74} -- \(E\) being used
here instead of \(\#E\), for a set \(E\), to simplify notations.

Two states of a Mealy automaton belong to the
same Nerode class if and only if they represent
the same element in the generated semigroup, \ie if and only
if they have the same production function \(\Sigma^*\to\Sigma^*\). Two
words on \(A\) of the same length~\(n\) are \emph{equivalent\/} if they
belong to the same Nerode class in \(A^n\). By extension, any two words on
\(A\) are \emph{equivalent} if they have the same
production function.  The set of all words equivalent to \(\mot{x}\in
A^*\), regardless of their length, is denoted by \(\clotx{\mot{x}}\).

Two states of a Mealy automaton belong to the same \(k\)-class if and
only if the restrictions of their production functions to
\(\Sigma^k\to\Sigma^k\) are equal.

The following remarks will be useful for the rest of the paper:
\begin{myremark}\label{rem-lg}
Let \(n\) be an integer. If each word of \(A^n\) is equivalent to a
strictly shorter word, then the semigroup \(\presm{\aut{A}}\) is
finite, its set of elements being \(\{\rho_{\mot{u}}, \mot{u}\in
A^{\leq n-1}\}\).
\end{myremark}

\begin{myremark}\label{rem-rel}
If two words of \(A^*\) are equivalent, so are their images under the
action of each element of~\(\presm{\dual{\aut{A}}}\).
\end{myremark}

\subsection{\(\mz\dz\)-reduction and
  \(\mz\dz\)-triviality}\label{sec-mdreduc}
The \(\mz\dz\)-reduction and the \(\mz\dz\)-triviality were introduced
in~\cite{AKLMP12} to give a sufficient but not necessary condition of
finiteness. We show in Section~\ref{sec-decidability} that, in the case
of a two-state or two-letter invertible-reversible Mealy automaton,
this condition is actually necessary.

A pair of dual Mealy automata is \emph{reduced} if both automata
are minimal. The \emph{$\mz\dz$-reduction} of a Mealy
automaton consists in minimizing the automaton or its dual
until the resulting pair of dual Mealy automata is reduced.
It is well-defined:
if both a Mealy automaton and its dual automaton are non-minimal,
the reduction is confluent~\cite{AKLMP12}.

The trivial Mealy automaton (see
Figure~\ref{fig-Aleshin-babyAleshin}(a)) generates the trivial
(semi)group.
If the \(\mz\dz\)-reduction of a Mealy automaton \(\aut{A}\) leads to
the trivial Mealy automaton, \(\aut{A}\) is said to be
\emph{\(\mz\dz\)-trivial}.
It is decidable whether a Mealy automaton is \(\mz\dz\)-trivial.
An \(\mz\dz\)-trivial Mealy automaton generates a finite
semigroup, but in general the converse is false~\cite{AKLMP12}.

A priori the sequence of mini\-mi\-zation-dualization can be
arbitrarily long: the minimization of a Mealy automaton with a minimal
dual can make the dual automaton non-minimal. Nevertheless, if the
automaton has two states, the \(\mz\dz\)-reduction can be shortened to
\(\mz\dz\mz\dz\). Hence, in this particular case, the time complexity
of the \(\mz\dz\)-reduction is~\({\cal O}(\Sigma\log{\Sigma})\).

\subsection{Portrait of a word}\label{sec-portraits}
{\bf 
Throughout this subsection, \(\aut{A}=(A,\Sigma,\delta,\rho)\) denotes an
invertible Mealy automaton.}

The set \(\Sigma^*\) can naturally
be thought of as a regular rooted tree; its root is the empty word and
two words are connected if and only if they are of the form
\(\mot{s}\) and \(\mot{s}i\), with
\(\mot{s}\in\Sigma^*,\ i\in\Sigma\). The set \(\Sigma^n\) is the
\emph{\(n\)th level\/} of \(\Sigma^*\). A \emph{branch\/} of the tree
\(\Sigma^*\) is a sequence of words \((\mot{s}_k)_{k\in\N}\) such that,
for each \(k\in\N\), \(\mot{s}_k\) is of length \(k\) and is a prefix of
\(\mot{s}_{k+1}\).

An \emph{automorphism\/} of \(\Sigma^*\) is a bijective map
\(\Sigma^*\to \Sigma^*\) preserving the root and the adjacency of
the vertices.
Each state \(x\) of the automaton \(\aut{A}\) acts on the regular
rooted tree \(\Sigma^*\) by the production rule \(\rho_x\).
The constructions of
this subsection are directly inspired by this view  (see~\cite{nek} and
references therein for more details on automorphisms acting on regular
rooted trees). Denote by \(Aut(\Sigma^*)\) the set of automorphisms of
\(\Sigma^*\).

Let \(g\) be an automorphism on the regular rooted tree \(\Sigma^*\).
For any word~\(\mot{s}\in\Sigma^*\), there exists a unique automorphism
\(g_{|\mot{s}}:\Sigma^* \to\Sigma^* \) called a \emph{section\/} of
\(g\) and defined, for all word \(\mot{t}\in \Sigma^*\), by 
\(g(\mot{st})=g(\mot{s})g_{|\mot{s}}(\mot{t})\),
see~\cite{nek} for more details.
The \emph{portrait\/} of \(g\) is the tree \(\Sigma^*\) in which each
vertex \(\mot{s}\in\Sigma^*\) is labeled by
\(g_{|\mot{s}}:\Sigma\to\Sigma\). It is denoted by
\(\portraitA{g}\). The permutation of \(\Sigma\) associated to the
empty word is the \emph{root permutation\/} of \(g\).
 A level (\resp branch) of a portrait is the labeled
level (\resp branch) of the tree.

For a given integer \(k\), the \emph{\(k\)-portrait} of \(g\) is
the  restriction of \(\portraitA{g}\) to levels 0 to~\(k-1\) and
is  denoted by \(\portraitA[k]{g}\), it represents the action of
\(g\) on the partial regular rooted tree \(\Sigma^{\leq k}\).

Let \(\mot{u}\in A^*\). The \emph{portrait\/} (or
\emph{\(\infty\)-portrait} --- \resp the \emph{\(k\)-portrait}) of
\(\mot{u}\) is the portrait (\resp the \(k\)-portrait) of
\(\rho_{\mot{u}}\): each vertex \(\mot{s}\in\Sigma^*\) 
is labeled by \(\rho_{\delta_{\mot{s}}(\mot{u})}:\Sigma\to\Sigma\). It
is denoted by \(\portrait{\mot{u}}\) (\resp
\(\portrait[k]{\mot{u}}\)). This notation is completely justified by
the fact that two equivalent words have the same production
function. An example is given in Figure~\ref{fig-portrait}.

\begin{figure}[h]
\centering
\vspace*{-.7cm}
\subfloat[][An invertible Mealy automaton,]{
\begin{tikzpicture}[->,>=latex,node distance=2.5cm]
\tikzstyle{every state}=[minimum size=12pt,inner sep=0pt]
\node[state] (a) {\(1\)};
\node[state] (b) [below right of=a,above=30] {\(3\)};
\node[state] (c) [above right of=a,below=30] {\(2\)};
\node[state] (h) [right of=b,right=22] {\(5\)};
\node[state] (aa) [right of=c,right=22] {\(4\)};
\node[state] (ab) [right of=a,right=122] {\(6\)};
\path (a) edge node[fill=white]{\(i|j\)} (b)
      (a) edge node[fill=white]{\(j|i\)} (c)
      (b) edge [out=-15,in=195] node[fill=white]{\(i|i\)---\(j|j\)} (h)
      (c) edge [out=-15,in=195] node[fill=white]{\(i|j\)---\(j|i\)} (aa)
      (h) edge node[fill=white]{\(i|j\)} (ab)
      (h) edge [out=165,in=15] node[fill=white]{\(j|i\)} (b)
      (aa) edge [out=165,in=15] node[fill=white]{\(i|j\)} (c)
      (aa) edge node[fill=white]{\(j|i\)} (ab)
      (ab) edge [out=100,in=80,looseness=.5] node[fill=white]{\(i|i\)---\(j|j\)} (a)
;
\end{tikzpicture}
}
\qquad\qquad\subfloat[][one of its portraits: \({\mathfrak
    p}_{3}\llbracket 1\rrbracket\).]{
\begin{tikzpicture}[level/.style={sibling distance=2.3cm/#1,level distance=1cm}]
\tikzstyle{every state}=[minimum size=17pt,inner sep=0pt]
\node[state] {\(\sigma\)}
  child {node[state] {\(\id\)}
           child {node[state] {\(\sigma\)}}
           child {node[state] {\(\sigma\)}}
        }
  child {node[state] {\(\sigma\)}
           child {node[state] {\(\sigma\)}}
           child {node[state] {\(\sigma\)}}
        };
\end{tikzpicture}}
\vspace*{-.3cm}
\caption{Some
 portrait of a two-letter Mealy
  automaton; \(\id=\id[\Sigma]\) and \(\sigma\) permutes \(i\) and
  \(j\).
}\label{fig-portrait}
\end{figure}

\vspace*{-.2cm}

The map from \(Aut(\Sigma^*)\) to the set of portraits induces a
monoid structure on the set of portraits. The neutral element of the
product of portraits is the \emph{identity  portrait}:
\(\I{\infty}=\portraitA{\id[\Sigma^*]}\).
The \emph{portraits of the automaton\/} \(\aut{A}\) are the portraits of
the elements of \(\presm{\aut{A}}\).
The product of two \(k\)-portraits of \(\aut{A}\) can be expressed in
terms of words: \(\portrait[k]{\mot{u}} \portrait[k]{\mot{v}} =
\portrait[k]{\mot{uv}}\). It provides a monoid structure to the set of
\(k\)-portraits of \(\aut{A}\), whose neutral element is the
\emph{identity \(k\)-portrait} \(\I{k}=\portraitA[k]{\id[\Sigma^*]}\).

A level of a portrait is \emph{homogeneous} if all its vertices have
the same label; a portrait is \emph{homogeneous} if all its levels are
homogeneous: the portrait \(\portrait[3]{1}\) of
Figure~\ref{fig-portrait}(b) has homogeneous levels 0 and 2, but is
not homogeneous. For any integer \(k\geq 1\), the \(k\)-portrait
\(\portraitA[k]{g}\) is \emph{almost homogeneous} if 
\(\portraitA[k-1]{g}\) and all the
\(\bigl(\portraitA[k-1]{g_{|i}}\bigr)_{i\in\Sigma}\) are
homogeneous.

An almost homogeneous \((k+1)\)-portrait \({\cal K}\) is built in
the following way from a homogeneous \(k\)-portrait \({\cal J}\)
and a sequence \(\tau=(\tau_i)_{i\in\Sigma}\) of 
permutations of \(\Sigma\): the restriction of \({\cal K}\) to levels
0 to \(k-1\) is \({\cal J}\) and the leaves of the subtree of the root
corresponding to the letter \(i\in\Sigma\) have all label \(\tau_i\). This
portrait is denoted by \(\J{\tau}\), see Figure~\ref{fig-Jtau}.

\begin{figure}[h]
\centering
\vspace*{-.2cm}
\begin{tikzpicture}[level/.style={sibling distance=1.2cm/#1}]
\tikzstyle{every state}=[minimum size=10pt,inner sep=0pt]
\tikzstyle{leaf}=[state]
\tikzstyle{subtree}=[state,isosceles triangle,shape border
  rotate=90,minimum height=7mm]
\node[state] (root) {} [level distance=1cm]
   [child anchor=north]
   child {node[subtree] (s1) {} [level distance=7mm]
          child {node[state]{}}
          child {node[state]{}}
         }
   child {node[subtree] (si) {} [level distance=7mm]
          child {node[state]{\(\tau_i\)}}
          child {node[state]{\(\tau_i\)}}
          node[above=19,right] {\(i\)}
         }
   child {node[subtree] (sn) {} [level distance=7mm]
          child {node[state]{}}
          child {node[state] (dernier) {}}
         };
\node[right of=root,anchor=west] {\({\cal J}\): homogeneous \(k\)-portrait};
\node[left of=si,left=-22] {\ldots};
\node[right of=si,right=-22] {\ldots};
\node[below of=s1,below=-10] {\tiny\ldots};
\node[below of=si,below=-10] {\tiny\ldots};
\node[below of=sn,below=-10] {\tiny\ldots};
\node[above of=root,above=-20] (nord) {};
\node[left of=s1,left=10,below of=s1,below=-24] (sudouest) {};
\node[right of=sn,right=10,below of=sn,below=-24] (sudest) {};
\draw[dashed,] (nord.north) -- (sudouest.south west) -- (sudest.south
east) -- cycle;
\node[right of=dernier,anchor=west] {\(\tau=(\tau_i)_{i\in\Sigma}\)
  sequence of permutations of \(\Sigma\)};
\end{tikzpicture}
\vspace*{-.3cm}
\caption{The almost homogeneous \((k+1)\)-portrait \(\J{\tau}\),
  \(\tau=(\tau_i)_{i\in\Sigma}\).}\label{fig-Jtau}
\end{figure}

\vspace*{-.2cm}

\begin{myremark}\label{rem-portrait}
The product of two homogeneous \(k\)-portraits is a homogeneous
\(k\)-portrait.

Furthermore, if \(\Sigma=\{i,j\}\):
\begin{itemize}
\item the square of a homogeneous \(k\)-portrait is the identity
\(k\)-portrait \(\I{k}\);
\item the square of an almost homogeneous \(k\)-portrait whose root
  permutation is the identity on \(\Sigma\) is the identity
  \(k\)-portrait;
\item the square of an almost homogeneous \(k\)-portrait
  \(\J{\tau_i,\tau_j}\) whose root permutation is the permutation of
  \(i\) and \(j\) is the identity \(k\)-portrait if and only if
  \(\tau_i=\tau_j\).
\end{itemize}
\end{myremark}

\subsection{Tensor closure}\label{sec-synt}
When a Mealy automaton generates a finite semigroup, we may augment
the alphabet on which it acts to gain a better control over its
structure.

Let \(\aut{A}=(A,\Sigma,\delta,\rho)\) be a Mealy automaton which
generates a finite semigroup. Its \emph{tensor closure\/} is the
Mealy automaton \(\clot{A}=(A,\Xi,\bar\delta,\bar\rho)\), where
\(\Xi=\{\clotx{\mot{s}}\mid \mot{s}\in\Sigma^*\}=\presm{\dual{\aut{A}}}\) and
\(\bar\delta\) and \(\bar\rho\) are the natural extensions of
\(\delta\) and \(\rho\):
\[\forall x\in A,\forall
\mot{s}\in\Sigma^*,\,\bar\delta_{\clotx{\mot{s}}}(x)=\delta_{\mot{s}}(x)\text{ and
}\bar\rho_x(\clotx{\mot{s}}) = \clotx{\rho_x(\mot{s})}\:.\]

A Mealy automaton is \emph{tensor closed\/} if it is
isomorphic to its tensor closure. Its dual is then minimal.

The following remark justifies the introduction of the tensor
closures:

\begin{myremark}\label{rem-clot}
Let \(\aut{A}\) be a two-state Mealy automaton
which generates a finite semigroup. 
Then the automaton \(\clot{A}\) generates a finite semigroup. If
\(\clot{A}\) is \(\mz\dz\)-trivial, then so is \(\aut{A}\).
\end{myremark}
The first result is obtained by looking at the
respective dual automata which generates the same semigroup.
The second result is immediate since a two-state Mealy automaton
\(\aut{A}\) is \(\mz\dz\)-trivial if and only if
\(\mz\dz\mz\dz(\aut{A})\) is trivial and the alphabet of
\(\dz\mz\dz(\aut{A})\) can be injected into the alphabet of
\(\clot{A}\). 

\begin{lemma}\label{lem-complete}
Let \(\aut{A}=(A,\Xi,\delta,\rho)\) be a two-state
invertible-reversible tensor closed Mealy automaton.
The connected components of the powers of \(\aut{A}\) are complete
graphs.
\end{lemma}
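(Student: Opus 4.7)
The plan is to combine two facts: (i) by reversibility, every connected component of $\aut{A}^n$ is strongly connected; (ii) by tensor closure, the alphabet $\Xi$ is closed under composition when viewed as a set of actions on $A^*$. Together these force any two states of the same component to be linked by a single edge, which is the sense in which the component is ``complete''.

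Concretely, since $\aut{A}$ is reversible, so is each power $\aut{A}^n$, and hence (by the observation recalled earlier that reversibility forces strong connectivity of components) every connected component of $\aut{A}^n$ is strongly connected. Fix a component $C$ and take $\mot{u},\mot{v}\in C$. Strong connectivity provides $\xi_1,\ldots,\xi_k\in\Xi$ with $\delta_{\xi_k}\circ\cdots\circ\delta_{\xi_1}(\mot{u})=\mot{v}$. The tensor-closedness of $\aut{A}$ is precisely the statement that $\Xi=\presm{\dual{\aut{A}}}$, i.e.\ that $\Xi$ already contains the full semigroup of production functions of $\dual{\aut{A}}$ acting on $A^*$; in particular $\Xi$ is closed under composition, so the above composite equals $\delta_\xi$ for a single $\xi\in\Xi$, yielding a direct $\xi$-labeled edge from $\mot{u}$ to $\mot{v}$ in $\aut{A}^n$. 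For the self-loop case $\mot{u}=\mot{v}$: either $C=\{\mot{u}\}$ and every letter already loops at $\mot{u}$, or we take a round trip $\mot{u}\to\mot{w}\to\mot{u}$ through some $\mot{w}\in C$ and apply the same closure argument to collapse the two steps into one letter of $\Xi$.

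No real obstacle is expected: the entire content of the argument is the recognition that tensor closure collapses concatenations of letters into single letters at the level of actions on $A^*$, after which paths inside a strongly connected component immediately become edges. I would note in passing that neither the two-state nor the invertibility hypothesis is actually needed for this particular lemma; only reversibility and tensor closure are used.
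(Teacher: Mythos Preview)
Your argument is correct and follows essentially the same route as the paper: reversibility gives strong connectivity of each component of $\aut{A}^n$, and tensor closure lets any path label in $\Xi^*$ be replaced by a single letter of $\Xi$, turning paths into edges. Your explicit handling of self-loops and your observation that neither the two-state nor the invertibility hypothesis is used here are accurate refinements, but the core of the proof is the same as the paper's.
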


\begin{proof}
Let \(k\) be an integer. The connected components of
\(\aut{A}^k\) are strongly connected by
reversibility. Hence any 
two words \(\mot{u}\) and \(\mot{v}\) in the same connected
component are connected by a
path with input label in \(\Xi^*\). The automaton \(\aut{A}\) being
tensor closed,
any word over \(\Xi\) is equivalent to a one-length word over \(\Xi\) and
so the connected component of \(\mot{u}\) and \(\mot{v}\) is a
complete graph: any two states are connected by a transition.
\end{proof}

\section{The semigroup is either free or finite}\label{sec-main}
Recall that a semigroup \(S\) is \emph{free\/} if there exists a subset
\(X\) of \(S\) such that every element of \(S\) can be written
uniquely as a word over \(X\), its \emph{rank\/} is then the
cardinality of~\(X\).
\begin{remark}
On the other hand, a group \(G\) is free if there exists a subset
\(X\) of \(G\) such that every element of \(G\) can be written
uniquely as an irreducible word over \mbox{\(X\sqcup X^{-1}\)}. An
invertible automaton can generate a free semigroup and a non-free 
group; for example, the dual of Aleshin automaton (see
Figure~\ref{fig-Aleshin-babyAleshin}(b)) generates a free semigroup, by
Theorems~\ref{th-main} and~\ref{thm-decidability}, but not a free
group: \(ba^{-1}ba^{-1}=1\).
\end{remark}

\begin{theorem}\label{th-main}
Let \(\aut{A}\) be a reversible two-state Mealy automaton.
If \(\aut{A}\) admits a disconnected power, then it generates a
finite semigroup, otherwise it generates a free semigroup of rank~2
with the states of~\(\aut{A}\) being free generators.
\end{theorem}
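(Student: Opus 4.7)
The plan is to prove both implications of the dichotomy separately, using as common backbone the Nerode analysis on the powers $\aut{A}^n$ together with the group action of the dual.

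Throughout, write $A=\{x,y\}$ and let $G_n$ denote the finite group of permutations of $A^n$ generated by the maps $\delta_i:A^n\to A^n$, $i\in\Sigma$; by reversibility these are indeed permutations. The connected components of $\aut{A}^n$ coincide with the $G_n$-orbits on $A^n$, and by Remark~\ref{rem-rel} the Nerode equivalence on $A^n$ is preserved by $G_n$.

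\textbf{Disconnected $\Rightarrow$ finite.} Suppose some $\aut{A}^n$ is disconnected and take $n$ minimal with this property. The equivariant projection $A^n\to A^{n-1}$ sends the (unique, by minimality of $n$) $G_{n-1}$-orbit onto at least two $G_n$-orbits, so the $G_{n-1}$-stabilizer of every $\mot{w}\in A^{n-1}$ must fix both of its two extensions $\mot{w}x,\mot{w}y\in A^n$. Unpacking this via the formula $\delta_{\mot{s}}(\mot{w}x)=\delta_{\mot{s}}(\mot{w})\cdot\delta_{\rho_{\mot{w}}(\mot{s})}(x)$ yields a rigid parity constraint: whenever $\delta_{\mot{s}}(\mot{w})=\mot{w}$, the output word $\rho_{\mot{w}}(\mot{s})$ must contain an even number of ``swap'' letters (those $i\in\Sigma$ with $\delta_i\neq\id$). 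I would exploit this parity control to show that the collection of production functions $\rho_{\mot{u}}$ for $\mot{u}\in A^*$ is bounded --- equivalently, that every word of some $A^m$ is Nerode-equivalent to a strictly shorter one --- whence finiteness of $\presm{\aut{A}}$ by Remark~\ref{rem-lg}.

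\textbf{Connected $\Rightarrow$ free of rank~$2$.} Assume every $\aut{A}^n$ is connected and suppose, for contradiction, that some distinct $\mot{u},\mot{v}\in A^n$ are Nerode-equivalent. Remark~\ref{rem-rel} combined with transitivity of $G_n$ on $A^n$ implies that every word in $A^n$ has a Nerode twin, so the Nerode classes in $A^n$ form a non-trivial $G_n$-invariant block system with all blocks of common size $k\geq 2$. I would then propagate this block system down the chain $A^n\to A^{n-1}\to\cdots\to A$, using the size-$2$ fibers to show that a non-trivial $G_m$-invariant equivalence persists at every level. At level~$1$ this forces either $\rho_x=\rho_y$ --- in which case $\aut{A}$ minimizes to a single state and a direct inspection then produces a disconnected power, contradicting our standing assumption --- or a disconnection of $\aut{A}$ itself, again a contradiction. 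With Nerode triviality on every $A^n$ in hand, the natural map from the free semigroup on $\{x,y\}$ to $\presm{\aut{A}}$ is injective on each graded piece; hypothetical cross-length collisions $\rho_{\mot{u}}=\rho_{\mot{v}}$ with $|\mot{u}|\neq|\mot{v}|$ are ruled out by comparing $\mot{u}\mot{v}$ with $\mot{v}\mot{u}$: a standard word-combinatorics dichotomy produces either a same-length Nerode collision (absurd) or expresses both words as powers of a common $\mot{w}$, which in turn forces $\rho_{\mot{w}}$ to have finite order and thus a further collision contradicting Nerode triviality.

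\textbf{Main obstacle.} The technical heart in both directions is the descent step from a level-$n$ phenomenon (disconnection, or a Nerode block system) down to a rigid constraint at low levels. The two-state hypothesis keeps the combinatorics of block systems and fiber structures tight enough to be controllable, but tracking how the dual action interacts with the equivariant projections $A^n\to A^{n-1}$ and with the parity of swap letters is subtle; in particular, the Nerode equivalence on $A^n$ is not in general a refinement of the pullback of the Nerode equivalence on $A^{n-1}$, so the descent may need to be reformulated in terms of the $k$-classes $\equiv_k$ rather than the full equivalence.
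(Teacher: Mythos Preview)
Your setup is right --- the $G_n$-action on $A^n$, the $G_n$-invariance of Nerode classes (Remark~\ref{rem-rel}), and the observation that under transitivity all Nerode classes in $A^n$ share a common size which is a power of~$2$ (this is Lemma~\ref{lm-classes2}). But the core step in the connected direction is a genuine gap, and you already see it: there is no mechanism for a nontrivial $G_n$-block system on $A^n$ to \emph{descend} to a nontrivial $G_{n-1}$-block system on $A^{n-1}$. The paper resolves this by going the \emph{other} way. From a collision in $A^{n+1}$ it shows inductively that $\mz(\aut{A}^m)$ has at most $2^n$ states for every $m>n$ (Lemma~\ref{lm-conn-m-lg}). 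The inductive step is a counting trick on the class $[x^m]$: among its $\geq 2^{m-n}$ elements, pick one, say $\mot{u}_1$, with the shortest terminal run of $x$'s; then $x\mot{u}_1$ is distinct from every $\mot{u}_jx$, so $\#[x^{m+1}]>2^{m-n}$, hence $\geq 2^{m+1-n}$ since the size must be a power of~$2$. This uniform bound on the minimizations forces finiteness, contradicting Lemma~\ref{lm-conn-inf}. Your descent plan does not produce anything comparable.

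Two smaller points. Your cross-length argument is both overcomplicated and, as stated, circular: the $\mot{u}\mot{v}=\mot{v}\mot{u}$ branch yields a torsion element, which gives back a cross-length collision, not a same-length one. The paper's argument (Lemma~\ref{lm-conn-lg-diff}) is a single line: if $|\mot{u}|<|\mot{v}|$ and $\rho_{\mot{u}}=\rho_{\mot{v}}$, transitivity on $A^{|\mot{v}|}$ plus Remark~\ref{rem-rel} makes every length-$|\mot{v}|$ word equivalent to some length-$|\mot{u}|$ word, whence finiteness by Remark~\ref{rem-lg}. In the disconnected direction, your parity observation at the threshold level is correct, but you do not indicate how it bounds the production functions. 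The paper instead bounds the geometry: past the connection degree $n$, every connected component of $\aut{A}^m$ has size exactly $2^n$ (Lemma~\ref{lm-conn-size}, via the elementary dichotomy of Lemma~\ref{lm-2N}), so only finitely many reversible $\Sigma$-automata of that size can occur among the $\mz(\aut{A}^m)$ and two of them must coincide.
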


Theorem~\ref{th-main} is a corollary of
Proposition~\ref{prop-pas-conn} and the case \(p=2\) in
Proposition~\ref{prop-free} below.

\medskip

Let us look at the connected components of the powers of a Mealy
automaton \(\aut{A}\).
For \(m>0\), \(\mot{u},\mot{v}\in A^m\), and \(x,y\in A\), if there
exists a path from \(\mot{u}x\) to \(\mot{v}y\) in \(\aut{A}^{m+1}\),
then there is a path from \(\mot{u}\) to \(\mot{v}\) in
\(\aut{A}^m\). Hence if \(\aut{A}^n\) is disconnected, so are the 
\(\aut{A}^k\), for all \(k>n\). Thus 
there exists at most one integer \(n\) such
that \(\aut{A}^n\) is connected and \(\aut{A}^{n+1}\) is
disconnected. This integer is called the \emph{connection degree\/} of
\(\aut{A}\). By convention, if \(\aut{A}\) is disconnected, its
connection degree is~\(0\), and it has an infinite connection degree
if no power of \(\aut{A}\) is disconnected. For a Mealy automaton,
having infinite connection degree coincides with the very classical
notion of level transitivity (or spherical transitivity) for its
dual~\cite{nek,gns}.

Note that the Baby Aleshin automaton
(see Figure~\ref{fig-Aleshin-babyAleshin}(c)) is 
reversible, has a connection degree of~2, three states, and generates
an infinite non-free semigroup (its generators have order~2). So
Theorem~\ref{th-main} and Proposition~\ref{prop-pas-conn} 
do not extend to bigger stateset. However, we conjecture that
Proposition~\ref{prop-free} extends to any stateset for invertible
automata.

\subsection{Finite connection degree}
In this section, we prove that a reversible two-state Mealy
automaton has a finite connection degree if and only if it
generates a finite semigroup.
This result is already known~\cite[Lemma~3]{clas32}, but we present
here a new proof; its main idea is to bound the sizes of the
connected components of the powers of \(\aut{A}\) once the connection
degree has passed.

\begin{lemma}\label{lm-conn-inf}
Let \(\aut{A}=(A,\Sigma,\delta,\rho)\) be a reversible Mealy automaton
with at least two states, which generates a semigroup with torsion
elements. Then its connection degree is finite.
\end{lemma}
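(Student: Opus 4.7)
The plan is to argue by contradiction, combining a length-shortening trick with the duality of finiteness for automaton semigroups.

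First I would extract an idempotent from the torsion hypothesis: if $\rho_\mot{u}\in\presm{\aut{A}}$ is torsion then the cyclic sub-semigroup it generates is finite and hence contains an idempotent, so some power $\rho_\mot{u}^k$ satisfies $\rho_\mot{u}^{2k}=\rho_\mot{u}^k$. Setting $\mot{e}=\mot{u}^k\in A^n$ with $n=k|\mot{u}|$, this translates into the equivalence $\mot{ee}\equiv \mot{e}$.

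Next, assume for contradiction that the connection degree of $\aut{A}$ is infinite, so that every power of $\aut{A}$ is strongly connected (using reversibility). In particular any $\mot{v}\in A^{2n}$ can be written as $\delta_\mot{s}(\mot{ee})$ for some $\mot{s}\in\Sigma^*$; Remark~\ref{rem-rel} applied to $\mot{ee}\equiv\mot{e}$ then yields $\mot{v}=\delta_\mot{s}(\mot{ee})\equiv \delta_\mot{s}(\mot{e})\in A^n$. Hence every word of $A^{2n}$ is equivalent to a strictly shorter word, Remark~\ref{rem-lg} forces $\presm{\aut{A}}$ to be finite, and the duality of finiteness recalled in Section~\ref{sec-semigroups} makes $\presm{\dual{\aut{A}}}$ finite as well.

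To close the argument I would turn to the dual side. Infinite connection degree of $\aut{A}$ is precisely level-transitivity of $\dual{\aut{A}}$: for every integer $k\geq 1$, the permutations $\delta_\mot{s}:A^k\to A^k$, $\mot{s}\in\Sigma^*$, act transitively on the finite set $A^k$. Since these permutations generate $\presm{\dual{\aut{A}}}$, the latter must realise at least $\#A^k$ distinct actions on the $k$-th level of $A^*$, and the hypothesis $\#A\geq 2$ then makes $\presm{\dual{\aut{A}}}$ infinite, contradicting the previous paragraph.

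The main obstacle, as I see it, is that the length-shortening via Remarks~\ref{rem-rel} and~\ref{rem-lg} alone yields only finiteness of $\presm{\aut{A}}$ and not an immediate contradiction; the key step is to transport that finiteness through duality and notice that it is incompatible with level-transitivity of the dual whenever there are at least two states.
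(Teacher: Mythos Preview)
Your argument is correct, but it takes a noticeably more indirect route than the paper. The paper's proof is a single direct computation: from $\rho_{\mot{u}^n}=\rho_{\mot{u}^{n+k}}$ it shows that for every $\mot{s}\in\Sigma^*$,
\[
\delta_{\mot{s}}(\mot{u}^{n+2k}) \;=\; \delta_{\mot{s}}(\mot{u}^n)\,\bigl(\delta_{\rho_{\mot{u}^n}(\mot{s})}(\mot{u}^k)\bigr)^2,
\]
so the connected component of $\mot{u}^{n+2k}$ is contained in $\{\mot{v}\mot{w}^2:\mot{v}\in A^{n|\mot{u}|},\,\mot{w}\in A^{k|\mot{u}|}\}$, a proper subset of $A^{(n+2k)|\mot{u}|}$ since $\#A\geq 2$; hence that power is disconnected. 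No contradiction, no auxiliary results.

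Your version first refines to an idempotent, then assumes infinite connection degree and, via Remarks~\ref{rem-rel} and~\ref{rem-lg}, deduces finiteness of $\presm{\aut{A}}$; only then do you reach a contradiction by transporting finiteness to the dual and confronting it with level-transitivity. Each step is valid, and the detour through duality is genuinely needed in your setup (finiteness of $\presm{\aut{A}}$ alone does not obviously bound the connection degree). What the paper's approach buys is brevity and self-containment; what yours buys is that it makes explicit the link with level-transitivity and the dual semigroup, which is conceptually pleasant but logically heavier. Incidentally, your idempotent $\mot{e}$ already allows the paper's shortcut: computing $\delta_{\mot{s}}(\mot{e}^3)=\delta_{\mot{s}}(\mot{e})\bigl(\delta_{\rho_{\mot{e}}(\mot{s})}(\mot{e})\bigr)^2$ disconnects $\aut{A}^{3|\mot{e}|}$ directly, without invoking Remarks~\ref{rem-lg}--\ref{rem-rel} or duality.
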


\begin{proof}
Since \(\presm{\aut{A}}\) has torsion elements, there exist a word
\(\mot{u}\in A^+\) and two integers \(n\geq 0\) and \(k>0\) such that
\(\mot{u}^n\) and \(\mot{u}^{n+k}\) are equivalent:
\(\rho_{\mot{u}^n}= \rho_{\mot{u}^{n+k}}\).

Let \(\mot{s}\in\Sigma^*\), we have: 
%\begin{equation*}
\(\delta_{\mot{s}}(\mot{u}^{n+2k}) = \delta_{\mot{s}}(\mot{u}^n)
\delta_{\rho_{\mot{u}^n}(\mot{s})}(\mot{u}^k) \delta_{\rho_{\mot{u}^{n+k}}(\mot{s})}(\mot{u}^k)
= \delta_{\mot{s}}(\mot{u}^n)
\bigl(\delta_{\rho_{\mot{u}^n}(\mot{s})}(\mot{u}^k)\bigr)^2\). %\:.
%\end{equation*}
Hence all the states of the connected component of \(\mot{u}^{n+2k}\)
have form \(\mot{vw}^2\) and \(\aut{A}^{(n+2k)|\mot{u}|}\) is
disconnected.
\end{proof}

{\bf In the 
reminder of this subsection, \(\aut{A}=(A,\Sigma,\delta,\rho)\)
denotes a reversible two-state Mealy automaton (\(A=\{x,y\}\)) with
finite connection degree~\(n\).} If \(z\in A\) is a state of
\(\aut{A}\), \(\bar{z}\in A\) denotes the other state: \(z\neq\bar{z}\).

\begin{lemma}\label{lm-2N}
Let \({\cal C}\) be a connected component of \(\aut{A}^m\) for some
\(m\), and let \(\mot{u}\in A^m\) be a state of \({\cal
  C}\). The connected component (in \(\aut{A}^{m+1}\)) of \(\mot{u}x\)
has size \(\#{\cal C}\) if it does not contain \(\mot{u}y\), and
\(2\#{\cal C}\) if it does contain \(\mot{u}y\).
\end{lemma}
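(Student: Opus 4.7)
The plan is to exploit the natural projection \(\pi\colon A^{m+1}\to A^m\) that forgets the last letter. The decomposition \(\delta_i(\mot{v}z)=\delta_i(\mot{v})\cdot\delta_{\rho_{\mot{v}}(i)}(z)\) — iterated to \(\delta_{\mot{s}}(\mot{v}z)=\delta_{\mot{s}}(\mot{v})\cdot\delta_{\rho_{\mot{v}}(\mot{s})}(z)\) for any input word \(\mot{s}\in\Sigma^*\) — shows that \(\pi\) carries each transition of \(\aut{A}^{m+1}\) to a transition of \(\aut{A}^m\) on the same input. In particular, \(\pi\) sends the component \(\aut{C}_1\) of \(\mot{u}x\) in \(\aut{A}^{m+1}\) into \(\aut{C}\); conversely, any path \(\mot{u}\xrightarrow{\mot{s}}\mot{v}\) in \(\aut{A}^m\) lifts to a path from \(\mot{u}x\) to some \(\mot{v}\cdot\delta_{\rho_{\mot{u}}(\mot{s})}(x)\), so \(\pi(\aut{C}_1)=\aut{C}\).

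The key pointwise fact is that \(\delta_{\rho_{\mot{u}}(\mot{s})}\), being a composition of the \(\delta_i\)'s of the reversible automaton \(\aut{A}\), is a permutation of the two-element set \(A=\{x,y\}\). Hence the two lifts \(\mot{u}x\) and \(\mot{u}y\) are sent by \(\delta_{\mot{s}}\) to the two distinct lifts of \(\mot{v}\): if \(\delta_{\mot{s}}(\mot{u}x)=\mot{v}x\) then \(\delta_{\mot{s}}(\mot{u}y)=\mot{v}y\), and vice versa. Letting \(\aut{C}_2\) be the component of \(\mot{u}y\), this tells us that, whenever \(\aut{C}_1\neq\aut{C}_2\), each fiber \(\{\mot{v}x,\mot{v}y\}\) over \(\mot{v}\in\aut{C}\) has exactly one element in \(\aut{C}_1\) and one in \(\aut{C}_2\).

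I then split into two cases. If \(\mot{u}y\in\aut{C}_1\), then \(\aut{C}_1=\aut{C}_2\) and the previous observation shows that both lifts of every \(\mot{v}\in\aut{C}\) lie in \(\aut{C}_1\), giving \(\#\aut{C}_1=2\#\aut{C}\). Otherwise \(\aut{C}_1\) and \(\aut{C}_2\) are disjoint and it suffices to show that no \(\mot{v}\in\aut{C}\) has both its lifts inside \(\aut{C}_1\). Assuming it does, the strong connectedness of \(\aut{C}_1\), inherited from the reversibility of \(\aut{A}^{m+1}\), provides a path \(\mot{v}y\xrightarrow{\mot{t}}\mot{u}x\) entirely in \(\aut{C}_1\); reapplying the permutation observation with \(\mot{v}\) in place of \(\mot{u}\) forces \(\delta_{\mot{t}}(\mot{v}x)=\mot{u}y\), contradicting \(\mot{u}y\notin\aut{C}_1\). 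Hence each fiber contributes exactly one element to \(\aut{C}_1\), so \(\#\aut{C}_1=\#\aut{C}\).

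The main subtlety will be the careful bookkeeping of the last letter under composed transitions and, above all, the use of strong connectedness (rather than mere connectedness) of \(\aut{C}_1\) to close the contradiction in the disjoint case; everything else is essentially set-theoretic counting.
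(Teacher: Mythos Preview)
Your proof is correct and follows essentially the same approach as the paper's: both rest on the observation that, by reversibility of \(\aut{A}\), the map \(\delta_{\rho_{\mot{u}}(\mot{s})}\) permutes \(\{x,y\}\), so \(\mot{u}x\) and \(\mot{v}z\) lie in the same component if and only if \(\mot{u}y\) and \(\mot{v}\bar z\) do. The paper compresses this into a single biconditional and the remark ``the result follows,'' whereas you have unpacked the two cases and made the use of strong connectedness explicit; the underlying argument is the same.
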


\begin{proof}
Let \({\cal D}\) be the connected component of \(\mot{u}x\):
\(\mot{v}\in A^m\) is a state of \({\cal C}\) if and only if there
exists \(z\in A\) such that \(\mot{v}z\) is a state of \({\cal D}\),
hence: \(N\leq \#\aut{D}\leq 2N\).
Let \(\mot{v}\) be a state of \({\cal C}\) and \(z\in A\):
\(\mot{u}x\) and \(\mot{v}z\) are in the same connected component if
and only if so are \(\mot{u}y\) and \(\mot{v}\bar{z}\). The result
follows.
\end{proof}

Recall that \(n\) is the connection degree of \(\aut{A}\).

\begin{lemma}\label{lm-conn-size}
For each \(m\geq n\), the connected components of \(\aut{A}^m\)
have size exactly \(2^n\).
\end{lemma}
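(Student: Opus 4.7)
The plan is induction on $m\geq n$, using Lemma~\ref{lm-2N} to constrain component sizes and a ``drop the first $m-n$ letters'' projection $\pi\colon \aut{A}^{m+1}\to \aut{A}^{n+1}$ to push any oversized component down to level $n+1$, where the structure is already known.

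For the base cases, first $m=n$: $\aut{A}^n$ is connected by definition of the connection degree, so its unique component has size $|A|^n=2^n$. Then $m=n+1$: Lemma~\ref{lm-2N} forces each component to have size $2^n$ or $2^{n+1}$, and the latter equals $|A|^{n+1}$, which would contradict the disconnection of $\aut{A}^{n+1}$. A second reading of Lemma~\ref{lm-2N} at this level also yields what I call the \emph{separation property}: for every $\mot{v}\in A^n$, the states $\mot{v}x$ and $\mot{v}y$ lie in distinct components of $\aut{A}^{n+1}$ (otherwise their common component would have size $2\cdot 2^n=2^{n+1}$).

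For the inductive step, assume the claim for some $m\geq n+1$; Lemma~\ref{lm-2N} then restricts each component of $\aut{A}^{m+1}$ to size $2^n$ or $2^{n+1}$. Suppose a component has size $2^{n+1}$; by Lemma~\ref{lm-2N} there exist $\mot{u}=u_1\cdots u_m\in A^m$ and $\mot{s}\in\Sigma^*$ with $\delta_{\mot{s}}(\mot{u}x)=\mot{u}y$. Setting $\tilde{\mot{u}}=u_{m-n+1}\cdots u_m$ and $\mot{t}=\rho_{u_1\cdots u_{m-n}}(\mot{s})$, iterating the cascade identity $\delta_{\mot{s}}(v\mot{w})=\delta_{\mot{s}}(v)\,\delta_{\rho_v(\mot{s})}(\mot{w})$ shows that the last $n+1$ coordinates of $\delta_{\mot{s}}(\mot{u}x)$ coincide with $\delta_{\mot{t}}(\tilde{\mot{u}}x)$. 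Hence $\delta_{\mot{t}}(\tilde{\mot{u}}x)=\tilde{\mot{u}}y$ in $\aut{A}^{n+1}$, directly contradicting the separation property.

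The only delicate ingredient is the verification that the ``drop the first $m-n$ letters'' projection commutes with the $\Sigma^*$-action up to the relabeling $\mot{s}\mapsto\rho_{u_1\cdots u_{m-n}}(\mot{s})$; this is a direct consequence of the cascade rule defining $\delta_{\mot{s}}$ on a power of $\aut{A}$. Once this commutation is in place, the whole argument collapses to two applications of Lemma~\ref{lm-2N}, one at level $n+1$ to establish the separation property and one at level $m+1$ to derive the contradiction.
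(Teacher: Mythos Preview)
Your proof is correct and follows the same inductive skeleton as the paper's, but the contradiction in the inductive step is obtained differently. The paper drops only the \emph{first} letter: from $\mot{u}$ and $\mot{u}^{\bullet}\overline{u_{m+1}}$ lying in the same component of $\aut{A}^{m+1}$ it deduces that $u_2\cdots u_{m+1}$ and $u_2\cdots u_m\overline{u_{m+1}}$ lie in the same component of $\aut{A}^m$, and then applies Lemma~\ref{lm-2N} \emph{backwards} to produce a component of size $2^{n-1}$ in $\aut{A}^{m-1}$, contradicting the induction hypothesis at level $m-1$. This requires carrying the hypothesis at two consecutive levels. You instead drop all $m-n$ leading letters at once via the cascade identity and land directly in $\aut{A}^{n+1}$, where the separation property (established once and for all in the base case) immediately gives the contradiction; your induction therefore needs only the hypothesis at level $m$. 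Both arguments rest on the same cascade rule $\delta_{\mot{s}}(\mot{v}\mot{w})=\delta_{\mot{s}}(\mot{v})\,\delta_{\rho_{\mot{v}}(\mot{s})}(\mot{w})$; the paper's version is more local (one-letter shift), while yours is slightly cleaner structurally since it isolates the separation property at level $n+1$ as a reusable fact rather than re-deriving a size contradiction at each stage.
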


\begin{proof}
By induction on~\(m\geq n\). For \(m\in\{n,n+1\}\), the property is
true (using Lemma~\ref{lm-2N} for \(m=n+1\)).

Assume \(m>n+1\). Suppose that the connected components of
  \(\aut{A}^{m-1}\) and \(\aut{A}^m\) have size~\(2^n\). Then let \({\cal C}\) be a connected
  component of \(\aut{A}^{m+1}\) and \(\mot{u}=u_1\cdots u_{m+1}\) a
  state of \({\cal C}\). The word \(\mot{u}^{\bullet}=u_1\cdots u_m\)
  belongs to a connected component \({\cal D}\) of \({\cal A}^m\), of
  size \(2^n\) by the induction hypothesis. Hence \({\cal C}\) has size
  \(2^n\) or \(2^{n+1}\) according to Lemma~\ref{lm-2N}.

Suppose that \({\cal C}\) has size \(2^{n+1}\): it means by
Lemma~\ref{lm-2N} that both \(\mot{u}\) and
\(\mot{u}^{\bullet}\overline{u_{m+1}}\) belong to \({\cal C}\). It
follows that \(u_2\cdots u_mu_{m+1}\) and \(u_2\cdots
u_m\overline{u_{m+1}}\) belong to the same connected component \({\cal
  E}\) of \(\aut{A}^m\), of size \(2^n\) by the induction
hypothesis. Hence Lemma~\ref{lm-2N} ensures the existence of a
connected component of \({\cal A}^{m-1}\) of size \(2^{n-1}\), 
contradicting the induction hypothesis.
\end{proof}

\begin{proposition}\label{prop-pas-conn}
The connection degree of a reversible two-state Mealy automaton is
finite if and only if it generates a finite semigroup.
\end{proposition}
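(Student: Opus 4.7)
The statement is a bi-implication, and I would prove the two directions separately. The easy direction $(\Leftarrow)$ follows directly from Lemma~\ref{lm-conn-inf}: when $\presm{\aut{A}}$ is finite, the powers of any $\mot{u}\in A^+$ must repeat, producing integers $a<b$ with $\mot{u}^a\equiv\mot{u}^b$, which is exactly the torsion hypothesis required by that lemma. Since $\aut{A}$ has two states, the lemma applies and concludes that the connection degree is finite.

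The content is in the reverse direction $(\Rightarrow)$. Assume the connection degree equals $n<\infty$. My plan is to pass to the dual and prove that the group $G=\pres{\dz(\aut{A})}$ of tree automorphisms of $A^*$ generated by $(\delta_i)_{i\in\Sigma}$ is finite; this will transfer back to $\presm{\aut{A}}$ via two facts stated in \S\ref{sec-semigroups}. First, $\dz(\aut{A})$ is invertible (because $\aut{A}$ is reversible), so its group and its semigroup are simultaneously finite. Second, a Mealy automaton generates a finite semigroup if and only if its dual does, which transports finiteness of $\presm{\dz(\aut{A})}$ to $\presm{\aut{A}}$. Observe that the connected components of $\aut{A}^m$ coincide with the $G$-orbits on $A^m$, so Lemma~\ref{lm-conn-size} supplies uniform orbit size $2^n$ for every $m\geq n$.

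The central technical step will be an equivariance observation: the projection $\pi_n:A^m\to A^n$ onto the first $n$ letters is $G$-equivariant. Unfolding the defining recursion $\delta_i(u_1\mot{u'})=\delta_i(u_1)\,\delta_{\rho_{u_1}(i)}(\mot{u'})$ shows by induction that the first $n$ letters of $\delta_i(\mot{u})$ depend only on $i$ and on $u_1\cdots u_n$; this extends to all of $G$ since $G$ is generated by the $\delta_i$. Connectedness of $\aut{A}^n$ means $A^n$ is itself a single $G$-orbit of size $2^n$, so for every $G$-orbit ${\cal C}\subseteq A^m$ with $m\geq n$ the map $\pi_n$ restricts to a $G$-equivariant surjection ${\cal C}\to A^n$ between transitive $G$-sets of equal cardinality, hence to a bijection.

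Consequently, any $g\in G$ fixing $A^n$ pointwise must also fix each such ${\cal C}$ pointwise, hence all of $A^m$, for every $m\geq n$. Because $G$ acts faithfully on $A^*=\bigcup_m A^m$, the restriction homomorphism $G\to\mathrm{Sym}(A^n)$ is injective, so $G$ embeds in a finite symmetric group and is itself finite; the cited chain of equivalences then delivers finiteness of $\presm{\aut{A}}$. The main obstacle is conceptual rather than computational: it lies in spotting the $G$-equivariance of $\pi_n$ and combining it with the orbit-size equality from Lemma~\ref{lm-conn-size} to conclude that $G$ is entirely pinned down by its level-$n$ action.
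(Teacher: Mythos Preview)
Your proof is correct, but it proceeds along a genuinely different line from the paper's own argument.

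For the forward implication the paper stays on the primal side: once Lemma~\ref{lm-conn-size} bounds every connected component of \(\aut{A}^m\) (for \(m\ge n\)) by \(2^n\) states, it observes that there are only finitely many reversible Mealy automata over~\(\Sigma\) with at most \(2^n\) states, so by pigeonhole some \(\mz(\aut{A}^p)\) and \(\mz(\aut{A}^q)\) with \(p<q\) coincide, and Remark~\ref{rem-lg} finishes. Your route instead passes to the dual group \(G=\pres{\dz(\aut{A})}\), identifies connected components of \(\aut{A}^m\) with \(G\)-orbits, and exploits the \(G\)-equivariance of the prefix projection \(\pi_n:A^m\to A^n\): since each orbit has size \(2^n=\#A^n\), the restriction of \(\pi_n\) to an orbit is a \(G\)-equivariant bijection onto the transitive \(G\)-set \(A^n\), forcing the level-\(n\) restriction map \(G\to\mathrm{Sym}(A^n)\) to be injective.

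What each approach buys: the paper's pigeonhole argument is purely combinatorial and avoids any group-theoretic detour, but it gives no effective bound. Your argument is more structural and yields an explicit embedding \(G\hookrightarrow\mathrm{Sym}(A^n)\); in fact it is essentially a group-theoretic precursor of Lemma~\ref{lm-identity}, obtained here without the extra invertibility hypothesis that the paper imposes there. Both proofs rest on the same crucial input, Lemma~\ref{lm-conn-size}.
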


\begin{proof}
Let \(\aut{A}=(A,\Sigma,\delta,\rho)\) be a reversible two-state Mealy
automaton.
If the connection degree of \(\aut{A}\) is~0,
\(\presm{\dual{\aut{A}}}\) is the trivial semigroup and
\(\presm{\aut{A}}\) is finite~\cite{AKLMP12}.

Otherwise, let \(n\geq 1\) be the connection degree of \(\aut{A}\):
by Lemma~\ref{lm-conn-size}, for \(m\geq n\), the connected components
of \(\aut{A}^m\) have size~\(2^n\). These
connected components are reversible Mealy automata on the alphabet
\(\Sigma\). Up to state numbering, there are only a finite number of
such automata and thus there exist \(p<q\) such that
\(\mz(\aut{A}^p) = \mz(\aut{A}^q)\). It follows by
Remark~\ref{rem-lg} that \(\presm{\aut{A}}\) is finite.

The reciprocal property is a particular case of Lemma~\ref{lm-conn-inf}.
\end{proof}

\subsection{Infinite connection degree}
Here we prove that if a reversible \(p\)-state Mealy automaton, \(p\)
prime, has infinite connection degree, then it generates a
free semigroup, the states of the automaton being free
generators. The idea is to bound the sizes of the
Nerode classes in the powers of \(\aut{A}\).

For the next three lemmas, let \(\aut{A}=(A,\Sigma,\delta,\rho)\) be a
reversible \(p\)-state 
Mealy automaton, \(p\) prime, with infinite connection degree
(\(A=\{x_1,\dots, x_p\}\)).  By Lemma~\ref{lm-conn-inf}, \(\aut{A}\)
generates an infinite semigroup.

\begin{lemma}\label{lm-conn-lg-diff}
There cannot exist
two equivalent words of different length in~\(A^*\).
\end{lemma}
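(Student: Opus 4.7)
The plan is a proof by contradiction that avoids working with powers or portraits, relying only on Remarks~\ref{rem-rel}--\ref{rem-lg} and Lemma~\ref{lm-conn-inf}. The key observation is that a single different-length equivalence propagates through the $\delta$-action into a uniform statement at one whole level, which via Remark~\ref{rem-lg} collapses the semigroup to a finite one, contradicting the infinite-connection-degree assumption.

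Concretely, suppose for contradiction that $\mot{u},\mot{v}\in A^*$ are equivalent with $m:=|\mot{u}|<|\mot{v}|=:m+k$ and $k\geq 1$. By Remark~\ref{rem-rel}, for every $\mot{s}\in\Sigma^*$ the pair $(\delta_\mot{s}(\mot{u}),\delta_\mot{s}(\mot{v}))$ still consists of equivalent words, of respective lengths $m$ and $m+k$. Infinite connection degree forces $\aut{A}^{m+k}$ to be connected, and reversibility upgrades this to strong connectedness (each $\delta_i$ being a permutation of $A^{m+k}$); hence as $\mot{s}$ ranges over $\Sigma^*$, the image $\delta_\mot{s}(\mot{v})$ covers all of $A^{m+k}$.

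Consequently every word of $A^{m+k}$ is equivalent to a word of length $m$, and in particular to a strictly shorter word. Remark~\ref{rem-lg} applied with $n=m+k$ then yields that $\presm{\aut{A}}$ is finite. But every element of a finite semigroup is torsion (the sequence $s,s^2,s^3,\ldots$ must repeat), and since $\aut{A}$ has $p\geq 2$ states, Lemma~\ref{lm-conn-inf} applies and forces the connection degree of $\aut{A}$ to be finite---contradicting the standing hypothesis.

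There is no genuine obstacle in this approach; the only thing to notice is that a single different-length equivalence, propagated by the transitive $\delta$-action coming from infinite connection degree, automatically verifies the hypothesis of Remark~\ref{rem-lg} at the longer length, after which Lemma~\ref{lm-conn-inf} closes the argument.
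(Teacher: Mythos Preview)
Your proof is correct and is essentially the same argument as the paper's: propagate the equivalence via Remark~\ref{rem-rel} and use connectedness of \(\aut{A}^{|\mot{v}|}\) (from infinite connection degree) to show every word of length \(|\mot{v}|\) is equivalent to a shorter one, then apply Remark~\ref{rem-lg}. The only cosmetic difference is that the paper uses the standing hypothesis (already recorded via Lemma~\ref{lm-conn-inf} just before the lemma) that \(\presm{\aut{A}}\) is infinite to get the contradiction directly, whereas you re-derive this by invoking Lemma~\ref{lm-conn-inf} again at the end; this extra step is harmless but unnecessary.
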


\begin{proof}
For each \(m\), \(\aut{A}^m\) is connected, and so any two words of
length~\(m\) are mapped one onto the other by an element of
\(\presm{\dual{\aut{A}}}\).

Let \(\mot{u}\) and \(\mot{v}\) be two equivalent words of different
lengths, say \(|\mot{u}|<|\mot{v}|\). Every
word of length~\(|\mot{v}|\) is then equivalent to a word of
length~\(|\mot{u}|\): if \(\mot{w}\) is of length \(|\mot{v}|\), then
\(\mot{w} = \delta_{\mot{t}}(\mot{v})\) for some
\(\mot{t}\in\Sigma^*\), and, by Remark~\ref{rem-rel}, \(\mot{w}\) is
equivalent to \(\delta_{\mot{t}}(\mot{u})\) of length \(|\mot{u}|\).
By Remark~\ref{rem-lg}, the semigroup
\(\presm{\aut{A}}\) is finite, which is impossible.
\end{proof}

\begin{lemma}\label{lm-classes2}
All the
Nerode classes of a given power \(A^m\) have the same size, which
happens to be a power of~\(p\).
\end{lemma}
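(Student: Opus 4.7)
The plan is to derive both conclusions (equal size, power of $p$) from a single transitivity observation together with Remark~\ref{rem-rel} and a divisibility argument.

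First I would exploit infinite connection degree: for every $m$, the power $\aut{A}^m$ is connected, hence strongly connected by reversibility (a connected component of a reversible Mealy automaton is strongly connected, as noted in Section~\ref{sec-sgp}). Consequently, for any $\mot{u},\mot{v}\in A^m$ there exists $\mot{t}\in\Sigma^*$ with $\delta_{\mot{t}}(\mot{u})=\mot{v}$. Now $\delta_{\mot{t}}:A^m\to A^m$ is a bijection, being a composition of the permutations $\delta_{s}$ of $A$ lifted coordinatewise to $A^m$.

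The key step is to combine this with Remark~\ref{rem-rel}, which says that $\delta_{\mot{t}}$ sends equivalent words to equivalent words. Thus $\delta_{\mot{t}}$ maps the Nerode class $[\mot{u}]$ into the Nerode class $[\mot{v}]$. Applying the same argument to a path from $\mot{v}$ to $\mot{u}$ (which exists by strong connectedness), I get a bijection $A^m\to A^m$ sending $[\mot{v}]$ into $[\mot{u}]$. Since both maps are injective, $\#[\mot{u}]=\#[\mot{v}]$. Hence all Nerode classes in $A^m$ have a common size, which I shall call $N_m$.

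Finally, the ``power of $p$'' claim is pure counting: the Nerode classes form a partition of $A^m$, so $N_m$ divides $\#A^m = p^m$. Because $p$ is prime, $N_m$ must itself be a power of $p$. The main (and only) subtlety is ensuring that the bijection $\delta_{\mot{t}}$ really does respect Nerode classes in both directions; this is immediate from Remark~\ref{rem-rel} applied to a path and to a return path, so I expect no serious obstacle here. The argument does not use the two-state hypothesis of the earlier lemmas and indeed works for any prime stateset, as required for Proposition~\ref{prop-free}.
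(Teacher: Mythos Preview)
Your proof is correct and follows essentially the same route as the paper's: connectedness of $\aut{A}^m$ provides a map $\delta_{\mot{t}}$ carrying $\mot{u}$ to $\mot{v}$, Remark~\ref{rem-rel} makes it send $[\mot{u}]$ into $[\mot{v}]$, injectivity gives $\#[\mot{u}]\le\#[\mot{v}]$, and symmetry together with the divisibility argument finishes. One small correction is in order: the map $\delta_{\mot{t}}:A^m\to A^m$ is \emph{not} a coordinatewise lift of the permutations $\delta_s$ of $A$---the action on the $k$-th coordinate depends on the previous ones through the production functions $\rho$. The correct reason it is a bijection is simply that powers of a reversible Mealy automaton are reversible (as noted just after the definition of $\aut{A}^n$), so each $\delta_s$ acts as a permutation of $A^m$; the paper invokes exactly this.
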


\begin{proof}
Let \(\mot{u}\in A^m\): \([\mot{u}]\subseteq A^m\) by definition. If
\([\mot{u}] = A^m\), the result is clear. Otherwise, let \(\mot{v}\in
A^m - [\mot{u}]\). Since \(\aut{A}^m\) is connected, \(\mot{u}\) is mapped
onto \(\mot{v}\) by an element of \(\presm{\dual{\aut{A}}}\); that is
there exists \(\mot{r}\in\Sigma^*\) such that
\(\mot{v}=\delta_{\mot{r}}(\mot{u})\).

By Remark~\ref{rem-rel}, any word equivalent to \(\mot{u}\) is mapped
by \(\delta_{\mot{r}}\) onto a word equivalent to \(\mot{v}\).
Since the automaton \(\aut{A}^m\) is reversible, \(\delta_{\mot{r}}\)
is a permutation of \(A^m\), hence we find \(\#[\mot{u}] = \#[\mot{v}]\).

The stateset of \(\aut{A}^m\) has size a power of~\(p\), where \(p\) is a
prime number, and so has any Nerode equivalence class.
\end{proof}

\begin{lemma}\label{lm-conn-m-lg}
There cannot exist
two equivalent words of the same length in \(A^*\).
\end{lemma}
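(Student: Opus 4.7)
I would argue by contradiction. Suppose $\mot{u},\mot{v}\in A^m$ are distinct equivalent words of the same length $m\geq 1$. By Lemma~\ref{lm-classes2}, every Nerode class at level $m$ has size $p^k$ for some $k\geq 1$, so in particular contains at least $p$ distinct words.

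The key structural observation I would use is that Nerode equivalence is a congruence on the free monoid $A^*$: from $\rho_{\mot{u}\mot{v}}=\rho_{\mot{v}}\circ\rho_{\mot{u}}$ one sees immediately that $\mot{u}_1\equiv\mot{u}_2$ and $\mot{v}_1\equiv\mot{v}_2$ imply $\mot{u}_1\mot{v}_1\equiv\mot{u}_2\mot{v}_2$. Iterating gives $[\mot{u}]^t\subseteq[\mot{u}^t]$ for every $t\geq 1$, and the $p^{kt}$ distinct concatenations $\mot{u}_{i_1}\cdots \mot{u}_{i_t}$ (for $\mot{u}_{i_j}\in[\mot{u}]$) all fall inside $[\mot{u}^t]$, so $|[\mot{u}^t]|\geq p^{kt}$. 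By Lemma~\ref{lm-classes2} applied at level $tm$, \emph{every} Nerode class at that level then has size at least $p^{kt}\geq p^t$, and hence at most $p^{t(m-1)}$ distinct semigroup elements are realised by length-$tm$ words.

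To close the argument, I would try to exhibit two equivalent words of different lengths, which would directly contradict Lemma~\ref{lm-conn-lg-diff}. The plan is to track an equivalent pair $(\mot{u}^t,\mot{w})$ with $\mot{w}\in[\mot{u}^t]\setminus\{\mot{u}^t\}$ under the transitive $\delta_{\mot{s}}$-action on $A^{tm}$ (available from the infinite connection degree and preserving $\equiv$ by Remark~\ref{rem-rel}), then exploit bijectivity of each $\delta_{\mot{s}}$ (from reversibility) together with the exponential growth of $|[\mot{u}^t]|$ to extract, for some sufficiently large $t$, a cross-length equivalence.

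The main obstacle is making this descent explicit. In the invertible case one could directly right- or left-cancel in $\presm{\aut{A}}$, immediately collapsing a same-length relation into a shorter one; but here $\aut{A}$ is merely reversible and such cancellation inside $\presm{\aut{A}}$ is unavailable. The shorter-length equivalence must instead be assembled from the dual's bijective action on $A^*$ combined with the primality of $p$ (which through Lemma~\ref{lm-classes2} locks class sizes to powers of $p$); this careful bookkeeping is the technical core of the proof.
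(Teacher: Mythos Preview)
Your proposal has a genuine gap: you never actually produce the cross-length equivalence you say you need, and the bound you \emph{do} establish is not strong enough to conclude. The congruence observation and the inequality $|[\mot{u}^t]|\geq p^{kt}$ are correct, but they only control class sizes at levels that are multiples of $m$, and the resulting upper bound on the number of Nerode classes at level $tm$---at most $p^{t(m-1)}$---still grows with $t$. From a growing bound you cannot run pigeonhole on minimized automata, and nothing in your sketch explains how ``tracking an equivalent pair under the $\delta_{\mot{s}}$-action'' would manufacture a relation between words of different lengths. You concede this is ``the technical core'' and leave it undone; as written, the argument does not close.

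The paper proceeds differently and avoids your obstacle entirely. Starting from a single nontrivial relation at length $n+1$, it runs an induction over \emph{every} level $m>n$ (not just multiples) showing that $\#[x_1^{m+1}]$ jumps to the next power of $p$. The key trick is combinatorial: list $[x_1^m]=\{x_1^m,\mot{u}_1,\dots,\mot{u}_{p^{m-n}-1}\}$, choose $\mot{u}_1$ with the \emph{shortest} $x_1$-suffix, and observe that the $p^{m-n}+1$ words
\[
x_1^{m+1},\ \mot{u}_1 x_1,\ \dots,\ \mot{u}_{p^{m-n}-1}x_1,\ x_1\mot{u}_1
\]
are pairwise distinct and all lie in $[x_1^{m+1}]$. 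Primality of $p$ via Lemma~\ref{lm-classes2} then forces $\#[x_1^{m+1}]\geq p^{m+1-n}$, so $\mz(\aut{A}^m)$ has at most $p^n$ states \emph{uniformly in $m$}. Pigeonhole on the finitely many Mealy automata with at most $p^n$ states over $\Sigma$ now gives $\mz(\aut{A}^k)\cong\mz(\aut{A}^\ell)$ for some $k<\ell$, and Remark~\ref{rem-lg} yields finiteness of $\presm{\aut{A}}$, contradicting the hypothesis. The idea missing from your attempt is precisely this single-step growth at the specific word $x_1^m$, which converts one relation into a uniform bound rather than a bound that drifts with $t$.
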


\begin{proof}
Let \(\mot{u}\) and \(\mot{v}\) be two different equivalent words of
the same length~\(n+1\). Let us prove by induction on \(m> n\) that
\(\mz(\aut{A}^m)\) has at most \(p^n\) states.

The automaton \(\aut{A}^{n+1}\) has \(p^{n+1}\)
  states. The words \(\mot{u}\) and \(\mot{v}\) are in the same Nerode
  class: by Lemma~\ref{lm-classes2}, all Nerode classes
  of~\({A}^{n+1}\) have at least \(p\)~elements and
  \(\mz(\aut{A}^{n+1})\) has at most \(p^n\) states.

Suppose that  \(\mz(\aut{A}^m)\) has at most \(p^n\) states.
Then, since all Nerode classes have the same size by
  Lemma~\ref{lm-classes2}, the induction hypothesis implies that they
  have at least \(p^{m-n}\) elements. Let us look at \([x_1^m]\): it
  contains
\[x_1^m, \mot{u}_1, \mot{u}_2, \dots, \mot{u}_{p^{m-n}-1}\:,\] which are
pairwise distinct. Among these words, there is at least one whose
suffix in \(x_1\) is the shortest, say \(\mot{u}_1\) without
loss of generality: \(p^{m-n}>1\) and \(x_1^m\) has the longest
possible suffix in \(x_1\). Hence \([x_1^{m+1}]\) contains the following
pairwise distinct \(p^{m-n}+1\) words
\[x_1^{m+1}, \mot{u}_1x_1, \mot{u}_2x_1, \dots,
\mot{u}_{p^{m-n}-1}x_1,x_1\mot{u}_1\:.\] By Lemma~\ref{lm-classes2},
\(\#[x_1^{m+1}]\) is a power of~\(p\), so \(\#[x_1^{m+1}]\geq p^{m+1-n}\). As
all Nerode classes of~\({A}^{m+1}\) have the same cardinality, we can
conclude that \(\mz(\aut{A}^{m+1})\) has at most
\(p^{m+1}/p^{m+1-n}=p^n\) elements, ending the induction.

Consequently, since there is only a finite number of different Mealy
automata with up to \(p^n\) states, there exist \(k<\ell\) such that
\(\mz(\aut{A}^k)\) and \(\mz(\aut{A}^{\ell})\) are equal up to state
numbering. By Remark~\ref{rem-lg}, the semigroup \(\presm{\aut{A}}\)
is finite, which is impossible.
\end{proof}

As a corollary of Lemmas~\ref{lm-conn-inf},~\ref{lm-conn-lg-diff}
and~\ref{lm-conn-m-lg} we can state the following proposition.

\begin{proposition}\label{prop-free}
Let \(\aut{A}\) be a reversible \(p\)-state Mealy automaton, \(p\)
prime. If the automaton~\(\aut{A}\) has infinite connection degree,
then it generates a free semigroup of rank~\(p\) with the
states of \(\aut{A}\) being free generators of the semigroup. The
converse holds for \(p=2\).
\end{proposition}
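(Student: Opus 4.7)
The plan is to observe that Proposition~\ref{prop-free} is essentially a bookkeeping corollary of the three preceding lemmas, so no new construction is needed. Recall that a semigroup \(S\) generated by a set \(A\) is free of rank \(\#A\) with \(A\) as free generators precisely when the natural surjection \(A^+\to S\) is injective, i.e.\ when no two distinct words over \(A\) are Nerode-equivalent.

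For the forward direction, I would assume \(\aut{A}\) has infinite connection degree and pick two distinct words \(\mot{u},\mot{v}\in A^+\). Either they have different lengths, which is impossible by Lemma~\ref{lm-conn-lg-diff}, or they have the same length, which is impossible by Lemma~\ref{lm-conn-m-lg}. Hence \(\mot{u}\) and \(\mot{v}\) are not equivalent, which is exactly the statement that \(A\) freely generates \(\presm{\aut{A}}\). The rank is \(p\) because, applying Lemma~\ref{lm-conn-m-lg} to the length-one words \(x_1,\dots,x_p\), the \(p\) states of \(\aut{A}\) represent \(p\) distinct elements of the semigroup.

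For the converse in the case \(p=2\), I would assume that \(\presm{\aut{A}}\) is free of rank~\(2\). Then \(\presm{\aut{A}}\) is infinite, so by Proposition~\ref{prop-pas-conn}---which for a reversible two-state Mealy automaton characterises finite connection degree by finiteness of the generated semigroup---the connection degree of \(\aut{A}\) must be infinite.

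There is no real obstacle in this corollary: all the technical work is carried out in Lemmas~\ref{lm-conn-lg-diff} and~\ref{lm-conn-m-lg}, which both rely on Lemma~\ref{lm-conn-inf} to justify their standing assumption that \(\presm{\aut{A}}\) is infinite (any finite semigroup would contain torsion elements and thereby force a finite connection degree). The only minor point requiring attention is to apply Lemma~\ref{lm-conn-m-lg} at the level of length-one words, so as to confirm that the rank is exactly \(p\) and not something smaller, and to recall that the generated semigroup is, by definition, the image of \(A^+\), so injectivity of this map is indeed all that needs to be checked to conclude freeness.
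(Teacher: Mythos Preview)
Your proposal is correct and follows essentially the same approach as the paper, which simply states that the proposition is a corollary of Lemmas~\ref{lm-conn-inf},~\ref{lm-conn-lg-diff}, and~\ref{lm-conn-m-lg}. Your only deviation is invoking Proposition~\ref{prop-pas-conn} rather than Lemma~\ref{lm-conn-inf} for the converse when \(p=2\); this is in fact the more precise citation, since what is needed is ``finite connection degree \(\Rightarrow\) finite semigroup'', which is the forward direction of Proposition~\ref{prop-pas-conn} and not the content of Lemma~\ref{lm-conn-inf} alone.

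One small wording issue: in your forward argument you write ``Either they have different lengths, which is impossible\ldots''. Distinct words of different lengths are of course not impossible; you mean to first suppose \(\mot{u}\) and \(\mot{v}\) are equivalent and then derive the contradiction. Make that assumption explicit before the dichotomy.
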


\section{Decidability of finiteness and of freeness}\label{sec-decidability}
This section is devoted to the decidability of finiteness and of
freeness for semigroups generated by two-state invertible-reversible
Mealy automata by linking Theorem~\ref{th-main} and the possible
\(\mz\dz\)-triviality of such an automaton.

\begin{lemma}\label{lm-identity}
Let \(\aut{A}=(A,\Sigma,\delta,\rho)\) be a two-state
invertible-reversible automaton of finite connection degree
\(n\). Two elements of \(\Sigma^*\) which have the same action on a word
of \(A^n\) are equivalent.
\end{lemma}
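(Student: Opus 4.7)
The plan is to reduce the lemma to the following: any element $g$ of the group $G:=\pres{\dual{\aut{A}}}$ that fixes some $\mot{u}\in A^n$ must be the identity. The reduction is routine. By Proposition~\ref{prop-pas-conn} the semigroup $\presm{\aut{A}}$ is finite, so by duality $\presm{\dual{\aut{A}}}$ is finite as well, and since $\dual{\aut{A}}$ is invertible this semigroup coincides with $G$; hence every element of $G$ is some $\delta_{\mot{r}}$ with $\mot{r}\in\Sigma^*$. Given $\mot{s},\mot{t}\in\Sigma^*$ with $\delta_{\mot{s}}(\mot{u})=\delta_{\mot{t}}(\mot{u})$, the element $g=\delta_{\mot{t}}\delta_{\mot{s}}^{-1}\in G$ fixes $\delta_{\mot{s}}(\mot{u})\in A^n$, so once we show $g=\id$ we obtain $\delta_{\mot{s}}=\delta_{\mot{t}}$, i.e.\ $\mot{s}\equiv\mot{t}$.

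Fix such a $g=\delta_{\mot{r}}$ fixing some $\mot{u}\in A^n$. The first step is to prove that $g$ acts as the identity on the entire subtree $\mot{u}\cdot A^*$, by induction on the depth $k$. The case $k=0$ is the hypothesis. For the induction step, assume $g(\mot{u}\mot{w})=\mot{u}\mot{w}$ for some $\mot{w}\in A^k$, and let $x\in A$. Since $g$ is a tree automorphism fixing $\mot{u}\mot{w}$ and $|A|=2$, the image $g(\mot{u}\mot{w}x)$ is either $\mot{u}\mot{w}x$ or $\mot{u}\mot{w}\bar{x}$. By Lemma~\ref{lm-conn-size}, the connected components in both $\aut{A}^{n+k}$ and $\aut{A}^{n+k+1}$ have size exactly $2^n$; applying Lemma~\ref{lm-2N} to the component of $\mot{u}\mot{w}$ then shows that $\mot{u}\mot{w}x$ and $\mot{u}\mot{w}\bar{x}$ lie in distinct components of $\aut{A}^{n+k+1}$. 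Since $g\in G$ preserves the $G$-orbits, which are precisely these components, we conclude $g(\mot{u}\mot{w}x)=\mot{u}\mot{w}x$.

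Consequently the section $g_{|\mot{u}}$ is the identity of $A^*$. The section formula for automaton groups gives $g_{|\mot{u}}=\delta_{\rho_{\mot{u}}(\mot{r})}$, so $\rho_{\mot{u}}(\mot{r})$ is Nerode-equivalent to the empty word in $\dual{\aut{A}}$. Remark~\ref{rem-rel}, applied to $\dual{\aut{A}}$ (whose dual is $\aut{A}$), says that the action of any element of $\presm{\aut{A}}$ preserves the Nerode equivalence of $\dual{\aut{A}}$; in particular $\rho_{\mot{u}}:\Sigma^*\to\Sigma^*$ maps Nerode classes to Nerode classes. As $\rho_{\mot{u}}$ is moreover a bijection on every $\Sigma^m$ (invertibility of $\aut{A}$) and fixes the empty word, it induces a permutation of Nerode classes that fixes $[\epsilon]$ set-wise, so $\rho_{\mot{u}}(\mot{r})\equiv\epsilon$ forces $\mot{r}\equiv\epsilon$, and $g=\delta_{\mot{r}}=\id$.

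The delicate part is the inductive extension from fixing a single word at level $n$ to fixing the entire subtree beneath it; it relies essentially on the stabilization of the component sizes at $2^n$ from level $n$ onward, which is available only because the connection degree is finite. The subsequent passage from "trivial section at $\mot{u}$" to "$g$ is trivial" is conceptually clean, being nothing but a dualization of Remark~\ref{rem-rel} combined with the bijectivity of $\rho_{\mot{u}}$.
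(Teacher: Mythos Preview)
Your proof is correct and follows essentially the same route as the paper's: reduce to showing that any element of $\presm{\dual{\aut{A}}}$ fixing some $\mot{u}\in A^n$ is the identity; use the stabilization of component sizes at $2^n$ (Lemmas~\ref{lm-2N} and~\ref{lm-conn-size}) to deduce that the whole subtree $\mot{u}A^*$ is fixed; conclude that the section at $\mot{u}$, namely $\delta_{\rho_{\mot{u}}(\mot{r})}$, is trivial; and finally transfer this back to $\delta_{\mot{r}}$ via Remark~\ref{rem-rel} together with the invertibility of $\rho_{\mot{u}}$. The only cosmetic differences are that you phrase the middle step as a tree induction on the depth while the paper states the equivalent fact that two distinct words of length $m>n$ with the same length-$n$ prefix lie in different components, and that you pass explicitly through the group $G$ (legitimate since $\presm{\dual{\aut{A}}}$ is a finite semigroup of bijections, hence a group) whereas the paper stays in the semigroup.
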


\begin{proof}
It is sufficient to prove that \(\id[A^*]\) is the only element of
\(\presm{\dual{\aut{A}}}\) which fixes a word of \(A^n\).

If \(n=0\), \(\presm{\dual{\aut{A}}}\) is the
trivial semigroup and the result is true. Otherwise, let \(\mot{u}\in
A^{n}\) and \(\mot{s}\in\Sigma^*\) such that
\(\mot{u}\) is stable by \(\delta_{\mot{s}}\):
\(\delta_{\mot{s}}(\mot{u}) = \mot{u}\).

By Lemma~\ref{lm-2N}, \(\aut{A}^{n+1}\) has two connected components:
\(\mot{u}x\) belongs to one of them and \(\mot{u}y\) to the other
one. Looking forward, a connected component \({\cal C}\) of
\(\aut{A}^m\), for \(m\geq n\), originates two connected components of
\(\aut{A}^{m+1}\): \(\{\mot{v}z_{\mot{v}}\mid \mot{v}\in{\cal C},
z_{\mot{v}}\in A\}\) and \(\{\mot{v}\overline{z_{\mot{v}}}\mid
\mot{v}\in{\cal C}\}\).
 And all connected
component of \(\aut{A}^{m+1}\) are built this way. Hence if two
different words of the same length~\(m>n\) have the same prefix of
length~\(n\), they belong to different connected components of
\(\aut{A}^m\).

Let \(\mot{t}\in\Sigma^*\) satisfy 
\(\rho_{\mot{u}}(\mot{s}) = \mot{t}\), and let
\(\mot{v}, \mot{w}\in A^*\) such that \(\mot{t}\) maps
\(\mot{v}\) onto \(\mot{w}\): \(\delta_{\mot{t}}(\mot{v}) = \mot{w}\).

The words \(\mot{uv}\) and \(\mot{uw}\) belong to the same connected
component: \[\delta_{\mot{s}}(\mot{uv}) =
\delta_{\mot{s}}(\mot{u})\delta_{\rho_{\mot{u}}(\mot{s})}(\mot{v}) =
\mot{u}\delta_{\mot{t}}(\mot{v}) = \mot{uw}\:,\]
 and have a common prefix of length~\(n\), so
they are equal. Hence: \(\delta_{\mot{t}}=\id[A^*]\). As
\(\dual{\aut{A}}\) is reversible, \(\mot{t}\) is mapped onto
\(\mot{s}\) by an element of \(\presm{\aut{A}}\) and
\(\delta_{\mot{s}}=\text{id}_{A^*}\).
\end{proof}

We have a similar (but weaker) result on shorter words for
tensor closed Mealy automata.
{\bf In the next three lemmas of this section,
  \(\aut{A}=(A,\Xi,\delta,\rho)\) denotes 
a tensor closed two-state invertible-reversible automaton of
finite connection degree~\(n\): \(A=\{x,y\}\).}
By Lemma~\ref{lem-complete}, \(\aut{A}^n\) is complete as a
graph. Furthermore, a transition has a unique label: if a
transition had several labels, they would coincide on a word of \(A^n\) and
by Lemma~\ref{lm-identity} they actually would be the same letter of \(\Xi\).

\begin{lemma}\label{lm-Ak}
Let \(k\) be an integer, \(1\leq k\leq
n\). Two elements of \(\Xi^*\) which map a given word of \(A^k\) into
the same word have the same action on \(A^k\).
\end{lemma}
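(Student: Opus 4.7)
The plan is to proceed by downward induction on $k$, from $n$ to $1$.

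\emph{Base case} $k=n$: if $\delta_{\mot{s}}(\mot{u}) = \delta_{\mot{t}}(\mot{u})$ for some $\mot{u}\in A^n$, then by Lemma~\ref{lm-identity} the words $\mot{s}$ and $\mot{t}$ are Nerode-equivalent, so their production functions coincide on all of $A^*$ and hence agree on $A^n$.

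\emph{Inductive step} from $k+1$ to $k$ (for $k<n$): using tensor closure I may assume $\mot{s},\mot{t}\in\Xi$, and I assume $\mot{s}\neq\mot{t}$ (otherwise the conclusion is trivial). For each letter $a\in A$,
\[
\delta_{\mot{s}}(\mot{u}a)=\delta_{\mot{s}}(\mot{u})\,\delta_{\rho_{\mot{u}}(\mot{s})}(a)=\mot{w}\,\delta_{\rho_{\mot{u}}(\mot{s})}(a),
\]
and analogously for $\mot{t}$. If for some $a\in A$ the last letters $\delta_{\rho_{\mot{u}}(\mot{s})}(a)$ and $\delta_{\rho_{\mot{u}}(\mot{t})}(a)$ coincide, then $\delta_{\mot{s}}$ and $\delta_{\mot{t}}$ agree at $\mot{u}a\in A^{k+1}$, and the inductive hypothesis yields $\delta_{\mot{s}}|_{A^{k+1}}=\delta_{\mot{t}}|_{A^{k+1}}$, which projects to $\delta_{\mot{s}}|_{A^k}=\delta_{\mot{t}}|_{A^k}$ by keeping only the first $k$ letters.

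The main obstacle is the remaining case, where $\delta_{\rho_{\mot{u}}(\mot{s})}|_A$ and $\delta_{\rho_{\mot{u}}(\mot{t})}|_A$ are the two opposite permutations of $A=\{x,y\}$. To handle this I would lift the comparison into $\aut{A}^{n+1}$, whose connected components have size $2^n$ (Lemma~\ref{lm-conn-size}) and whose edges inherit the unique-label property of $\aut{A}^n$ noted just before the lemma. Following the pattern of the proof of Lemma~\ref{lm-identity}---two words of $\aut{A}^m$ for $m>n$ with a common length-$n$ prefix lie in distinct components---while tracking how $\mot{s}$ and $\mot{t}$ act on $\mot{u}\mot{u}'a\in A^{n+1}$ for $\mot{u}'\in A^{n-k}$ and $a\in A$, the argument should show that the bad case either forces equality of some images in $A^n$ (implying $\mot{s}\equiv\mot{t}$ via Lemma~\ref{lm-identity}, a contradiction) or that the hypothesis at $\mot{u}$ propagates to all of $A^k$. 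Turning this combinatorial analysis into a clean argument in the opposite-permutations case is the technical crux of the proof.
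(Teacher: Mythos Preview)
Your proposal has a genuine gap: the ``opposite-permutations'' case is not proved. You identify it correctly as the crux, but what follows is only a vague plan (lift to \(\aut{A}^{n+1}\), use the unique-label property, track images of \(\mot{u}\mot{u}'a\)) with no concrete mechanism showing how a contradiction or the desired agreement actually emerges. In particular, once \(\delta_{\rho_{\mot{u}}(\mot{s})}\) and \(\delta_{\rho_{\mot{u}}(\mot{t})}\) are opposite on \(A\), nothing you have written forces \(\delta_{\mot{s}}\) and \(\delta_{\mot{t}}\) to collide at any word of \(A^{k+1}\) or of \(A^n\), so neither the inductive hypothesis nor Lemma~\ref{lm-identity} is yet applicable. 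The downward induction framework is reasonable, but the remaining case is not a routine detail---it is essentially the whole lemma.

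The paper's proof takes a completely different route: a counting argument with no induction on \(k\). The key observation you are missing is that \(\#\Xi = 2^n\). Indeed, \(\aut{A}^n\) is a complete graph on \(2^n\) vertices with uniquely labeled edges (as noted just before the lemma), so the number of labels is exactly \(2^n\). From this, the Nerode refinement \((\equiv_k)\) on \(\Xi\) must reach singletons in exactly \(n\) steps, halving every class at each step, so \(\#[s]_k = 2^{n-k}\). On the other hand, the set \(\{t\in\Xi \mid \delta_t(\mot{u}) = \delta_s(\mot{u})\}\) has size at most \(2^{n-k}\), because distinct such \(t\) must send any fixed extension \(\mot{u}\mot{v}\in A^n\) to distinct words (Lemma~\ref{lm-identity}), all sharing the prefix \(\delta_s(\mot{u})\). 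Since \([s]_k\) is contained in that set and both have size \(2^{n-k}\), they coincide. This cardinality trick is the missing idea; your inductive scheme does not recover it.
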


\begin{proof}
Each word of \(\Xi^*\) is equivalent to a letter of \(\Xi\),
hence it is sufficient to prove the result for letters.

The Mealy automaton \(\aut{A}^n\) has \(2^n\) states, is complete as a
graph and each transition has a unique label, so \(\#\Xi=2^n\). By
hypothesis, \(\Xi\) is the set of elements of
\(\presm{\dual{\aut{A}}}\), so \(\#\presm{\dual{\aut{A}}}=2^n\).

Let us consider the minimization of \(\dual{\aut{A}}\), using the
sequence of increasingly finer equivalences \((\equiv_k)\) introduced
in Section~\ref{sec-nerode}.
Each \(n\)-class of \(\Xi\) is a singleton by Lemma~\ref{lm-identity},
hence the sequence \((\equiv_k)\) remains constant at least from \(n\) on.
So the Nerode equivalence produces \(2^n\) equivalence classes formed
uniquely by singletons, by partitioning the stateset of
\(\dual{\aut{A}}\) of cardinality~\(2^n\) in \(n\) steps, each step
cutting each class of the previous one into at most two subsets as
\(\#A=2\). 
Hence the equivalence
\(\equiv_k\) cuts each \((k-1)\)-class into two sets of the same cardinality:
\(\forall k, 0\leq k\leq n, \forall s\in\Xi,\, \#[s]_k
=\#[s]_{k-1}/2 = 2^{n-k}\).

Let \(k\), \(1\leq k\leq n\), \(\mot{u}\in A^k\), and \(s\in\Xi\). We
have:
\begin{equation}\label{eq-equivk}
[s]_k\subseteq \{t\in\Xi\mid t(\mot{u})=s(\mot{u})\}\:.
\end{equation}
The left set in Equation~\eqref{eq-equivk} has cardinality
\(2^{n-k}\), it is the set of elements of \(\Xi\) which coincide with
\(s\) on \(A^k\). Since two elements of \(\Xi\) whose actions coincide on a
word of \(A^n\) are equivalent, the right set of
Equation~\eqref{eq-equivk} has cardinality at most \(\#A^{n-k} =
2^{n-k}\), and so the two sets of Equation~\eqref{eq-equivk} are
equal, leading to the result.
\end{proof}

One consequence of Lemma~\ref{lm-Ak} is that an element of \(\Xi^*\)
which fixes a word of length \(k\) on \(A\) fixes completely \(A^k\).

Denote by \(\id\) the identity of \(A\) and by \(\sigma\) the
permutation of \(x\) and \(y\).
We can translate Lemma~\ref{lm-Ak} in terms of
portraits of \(\dual{\aut{A}}\): whenever two
\(k\)-portraits of \(\dual{\aut{A}}\)
have an identical branch, they are equal. In particular, \(\I{k}\)
being a portrait of \(\dual{\aut{A}}\), if a whole branch
of a \(k\)-portrait of \(\dual{\aut{A}}\) 
is labeled by \(\id\), this portrait is \(\I{k}\).
Hence if in a \(k\)-portrait of \(\dual{\aut{A}}\), all
vertices at level less than \(k-1\) are labeled by \(\id\), 
this portrait is either \(\I{k}\) or
\(\J[\I{k-1}]{\sigma,\sigma}\). Note that for \(k\leq n\), both
\(\I{k}\) and \(\J[\I{k-1}]{\sigma,\sigma}\) are portraits of
\(\dual{\aut{A}}\).

By Lemma~\ref{lm-identity}, any element of
\(\presm{\dual{\aut{A}}}\) whose \(n\)-portrait is \(\I{n}\) acts
trivially on \(A^*\).

What are the possible portraits of \(\dual{\aut{A}}\)? Since
\(\aut{A}^n\) is connected and \(\aut{A}\) is tensor closed, it is
immediate that each finite sequence \((\pi_i)_{1\leq i\leq n}\in\{\id,\sigma\}^n\) 
labels a branch of an \(n\)-portrait of \(\dual{\aut{A}}\): in
\(\aut{A}^n\), there is a transition with input \(s\in\Xi\) from \(x^n\) to
\(\pi_1(x)\cdots \pi_n(x)\) and the leftmost branch of~\(\portrait[n]{s}\) is
labeled by \(\pi\).

\begin{lemma}\label{lem-portrait}
The portraits of \(\dual{\aut{A}}\) are homogeneous.
\end{lemma}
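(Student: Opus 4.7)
The plan is to prove, by strong induction on $k \in \{1,\ldots,n\}$, that every $k$-portrait of $\dual{\aut{A}}$ is homogeneous, and then to extend the conclusion to the infinite portrait. The base case $k=1$ is trivial, since a $1$-portrait has only the root vertex.

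For the inductive step, assume every $k$-portrait of $\dual{\aut{A}}$ is homogeneous. The class-counting in the proof of Lemma~\ref{lm-Ak} yields exactly $2^{k+1}$ distinct $(k+1)$-portraits of $\dual{\aut{A}}$, which also equals the number of abstract homogeneous $(k+1)$-portraits (indexed by sequences $(\pi_0,\ldots,\pi_k)\in\{\id,\sigma\}^{k+1}$). By Remark~\ref{rem-portrait}, the homogeneous portraits are closed under product and each squares to the identity, so they form an elementary abelian $2$-group. It then suffices to exhibit $k+1$ independent homogeneous generators inside $\dual{\aut{A}}$: one with $\sigma$ at each single level $j-1$ for $j = 1,\ldots,k+1$. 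The top generator $\J[\I{k}]{\sigma,\sigma}$ is known to lie in $\dual{\aut{A}}$ by the discussion preceding the lemma. For each $j < k+1$, I would lift $\J[\I{j-1}]{\sigma,\sigma}$ (a homogeneous $j$-portrait of $\dual{\aut{A}}$) to a homogeneous $(k+1)$-portrait of $\dual{\aut{A}}$; the existence of this lift will rely on the tensor closedness of $\aut{A}$ together with the completeness of $\aut{A}^k$ from Lemma~\ref{lem-complete}, which forces the labels at the intermediate levels $j,\ldots,k$ to be $\id$.

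For the infinite portrait, once each $n$-portrait is known to be homogeneous, an element $s \in \Xi$ with homogeneous $n$-portrait $(\pi_0,\ldots,\pi_{n-1})$ acts on $A^n$ coordinatewise by the $\pi_i$, so $s^2$ fixes $A^n$ pointwise and hence equals the identity by Lemma~\ref{lm-identity}. In particular $\Xi$ is an elementary abelian $2$-group, and iterating the same generator-argument level by level propagates homogeneity to every level of the infinite tree, completing the proof.

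The most delicate step is the lifting in the second paragraph: taking a homogeneous $j$-portrait of $\dual{\aut{A}}$ and producing a homogeneous $(k+1)$-extension inside $\dual{\aut{A}}$. I expect this to require careful bookkeeping of how the sections at the $2^{j-1}$ level-$(j-1)$ nodes of an appropriately chosen element of $\Xi$ coordinate across the two root subtrees, exploiting the bijection between $\Xi$ and $A^n$ furnished by the completeness of $\aut{A}^n$ (Lemma~\ref{lem-complete}) to force $\id$-labels at the deeper levels.
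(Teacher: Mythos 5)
There is a genuine gap, and it sits exactly at the step you flag as ``delicate''. Your counting framework is sound: by the class-halving in the proof of Lemma~\ref{lm-Ak} there are exactly \(2^{k+1}\) distinct \((k+1)\)-portraits of \(\dual{\aut{A}}\), the homogeneous ones form an elementary abelian \(2\)-group of order \(2^{k+1}\) by Remark~\ref{rem-portrait}, and the portraits of \(\dual{\aut{A}}\) are closed under product; so it would indeed suffice to realize the \(k+1\) single-level generators. But the existence of those generators is not a bookkeeping matter --- it is the homogeneity statement itself, restricted to special elements. Concretely: by Lemma~\ref{lm-Ak} you may pick \(s\in\Xi\) whose leftmost branch in \(\portrait[k+1]{s}\) is \(\id^{\,j-1}\sigma\,\id^{\,k+1-j}\), and the induction hypothesis forces levels \(0\) to \(k-1\) to be homogeneous as desired; but level \(k\) of \(\portrait[k+1]{s}\) consists of some \(\tau_1\) under the \(x\)-child of the root and some \(\tau_2\) under the \(y\)-child, and all you know is \(\tau_1=\id\) on the leftmost side. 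Showing \(\tau_2=\id\) as well is precisely the ``\(\tau_1=\tau_2\)'' question that the inductive step must resolve for \emph{every} letter. Tensor closedness and the completeness of the powers (Lemma~\ref{lem-complete}) do not force this: they guarantee that the portrait is \emph{determined} by its leftmost branch, not that you can read off the other branches. So your argument is circular at this point.

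The missing ingredient is the paper's squaring trick. For a letter whose root permutation is \(\sigma\), one embeds \(\portrait[k+1]{s}\) at the bottom-left of \(\I{n-k}\), completes to an \((n+1)\)-portrait \({\cal K}\) of \(\dual{\aut{A}}\), observes that the leftmost branch of \({\cal K}^2\) starts with \(\id^{\,n}\), and invokes Lemma~\ref{lm-identity} to conclude \({\cal K}^2=\I{n+1}\); Remark~\ref{rem-portrait} then yields \(\tau_1=\tau_2\). The case of root permutation \(\id\) is reduced to the previous one by multiplying by the homogeneous portrait with root \(\sigma\) and \(\id\) elsewhere. You would need this (or an equivalent) argument to produce your generators, at which point the counting becomes superfluous. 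Separately, your treatment of levels beyond \(n\) does not go through as stated: for \(k>n\) there are only \(2^n\) portraits of \(\dual{\aut{A}}\) but \(2^k\) abstract homogeneous \(k\)-portraits, so the ``exact count'' argument collapses, and ``iterating the generator argument'' is not defined there; the paper instead reruns the two-case analysis on \(\portrait[k]{s}\) directly.
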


\begin{proof}
Let us prove the result for \(k\leq n\), by induction on \(k\geq
1\). A \(1\)-portrait has a unique element, its root, and so is
homogeneous.

Suppose that the \(\ell\)-portraits of \(\dual{\aut{A}}\) are all
homogeneous, for \(\ell\leq k<n\). Let us consider a letter
\(s\in\Xi\) and \({\cal S}=\portrait[k+1]{s}\): it is almost
homogeneous by the induction hypothesis. More precisely: \({\cal S} =
%\J[\portrait[k]{s}]{\tau_1,\tau_2}
\toto\) for \(\tau_1,\tau_2\), some
permutations of \(A\).

\subparagraph*{First case: \(\delta_s\) permutes \(x\) and \(y\).}
We consider the following \((n+1)\)-portrait \({\cal K}\):
\begin{itemize}
\item the restriction of \({\cal K}\) to levels 0 to \((n-k-1)\) is
  \(\I{n-k}\),
\item in bottom-left of \(\I{n-k}\), we put \(\portrait[k+1]{s}\): the
root of \(\portrait[k+1]{s}\) is the left child of the
bottom-left leaf of \(\I{n-k}\)
  (it is possible since we can choose the left branch of a portrait,
  applying Lemma~\ref{lm-Ak} and \(\portrait[k+1]{s}\) is actually a
  portrait of \(\dual{\aut{A}}\)),
\item it is completed to be a portrait of \(\dual{\aut{A}}\).
\end{itemize}

The leftmost branch of \({\cal K}^2\) starts with \(\id^n\). Hence by
Lemma~\ref{lm-identity}, \({\cal K}^2\) is the identity
\((n+1)\)-portrait, which implies \(\tau_1=\tau_2\) by
Remark~\ref{rem-portrait} and Lemma~\ref{lm-identity}, that is \({\cal
  S}\) is homogeneous.

\subparagraph*{Second case: \(\delta_s\) stabilizes \(A\).}
Let \({\cal L}\) be the \((k+1)\)-portrait whose root permutation
is \(\sigma\) and all other vertices are labeled by \(\id\): it is
a portrait of \(\dual{\aut{A}}\) since so are all homogeneous
\((k+1)\)-portraits with root permutation \(\sigma\)
 from first case.
Then by multiplying \({\cal S}\) by \({\cal L}\), we obtain a
non-homogeneous \((k+1)\)-portrait with root permutation \(\sigma\)
which has to be
a portrait of \(\dual{\aut{A}}\). That is impossible.

\medskip

The proof is similar for \(k>n\), considering the portrait
\(\portrait[k]{s}\).
\end{proof}

\begin{lemma}\label{lem-trivial}
The states of \(\aut{A}\) are equivalent.
\end{lemma}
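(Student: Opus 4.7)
The plan is to first establish $x \equiv_0 y$ by invoking Lemma~\ref{lem-portrait}, and then to bootstrap to the full Nerode equivalence $x \equiv y$ using the constraint $\#A = 2$.

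For the base step, fix $s\in\Xi$ and consider its portrait in $\dual{\aut{A}}$, whose label at a vertex $\mot{u}\in A^*$ is $\delta_{\rho_{\mot{u}}(s)}$. By Lemma~\ref{lem-portrait} this portrait is homogeneous, so the labeling depends only on the level $|\mot{u}|$ and is encoded by a single sequence $(\pi_\ell)_{\ell\ge 0}$ of permutations of $A$. The sub-portrait rooted at the child $x$ of the root labels a vertex $\mot{v}\in A^*$ with $\delta_{\rho_{x\mot{v}}(s)} = \delta_{\rho_{\mot{v}}(\rho_x(s))}$, which is precisely the portrait of $\rho_x(s)\in\Xi$; the same holds for $y$. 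Both sub-portraits are homogeneous with level $\ell$ uniformly labeled by $\pi_{\ell+1}$, so they coincide. Hence $\rho_x(s)$ and $\rho_y(s)$ induce the same automorphism of $A^*$. Since $\aut{A}$ is tensor closed, $\Xi = \presm{\dual{\aut{A}}}$ and its elements are distinguished by their action on $A^*$; therefore $\rho_x(s) = \rho_y(s)$. Running over all $s\in\Xi$ yields $\rho_x = \rho_y$ as permutations of $\Xi$, that is $x\equiv_0 y$.

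For the inductive step, suppose $x\equiv_k y$ and pick any $s\in\Xi$. Reversibility of $\aut{A}$ forces $\delta_s$ to be a permutation of the two-element set $A=\{x,y\}$, so $\{\delta_s(x),\delta_s(y)\} = \{x,y\}$; whether $\delta_s = \id$ or $\delta_s = \sigma$, the required $\delta_s(x)\equiv_k \delta_s(y)$ reduces to the induction hypothesis. Combined with $x\equiv_k y$ itself, this gives $x\equiv_{k+1} y$, and the equivalence $x\equiv y$ follows in the limit.

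The only delicate point is in the base step: one has to identify the sub-portrait of $s$ rooted at a letter of $A$ with the portrait of the corresponding section $\rho_x(s)$ or $\rho_y(s)$ seen as an element of $\Xi$, and then invoke tensor closure to lift an equality of actions on $A^*$ to an equality in $\Xi$. Once this identification is granted, homogeneity makes $\rho_x(s)=\rho_y(s)$ immediate, and the two-letter refinement step of the Nerode sequence is essentially a tautology.
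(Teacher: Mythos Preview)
Your proof is correct and follows the same line as the paper's: use the homogeneity of the portraits of \(\dual{\aut{A}}\) (Lemma~\ref{lem-portrait}) to see that the two sub-portraits of \(s\) rooted at \(x\) and at \(y\) coincide, identify these with the portraits of the sections \(\rho_x(s)\) and \(\rho_y(s)\), and conclude \(\rho_x(s)=\rho_y(s)\) via tensor closure. The only difference is that the paper stops at ``\(\rho_x=\rho_y\)'' (on \(\Xi\)) and leaves implicit the passage to full Nerode equivalence, whereas you spell out the inductive step: with \(\#A=2\) and reversibility, \(\{\delta_s(x),\delta_s(y)\}=\{x,y\}\), so \(x\equiv_k y\) trivially propagates to \(x\equiv_{k+1} y\). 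That extra paragraph is a genuine (if easy) detail the paper elides.
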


\begin{proof}
By Lemma~\ref{lem-portrait}, all the portraits of \(\dual{\aut{A}}\) are
homogeneous. For any letter \(s\in\Xi\), since its
portrait is homogeneous, \(\rho_x(s)\) and \(\rho_y(s)\) are
equivalent. The automaton being tensor closed, they are equal, and so
\(\rho_x=\rho_y\).
\end{proof}

\begin{theorem}\label{thm-decidability}
Let \(\aut{A}\) be a two-state invertible-reversible Mealy
automaton. It generates a finite group if and only if it is
\(\mz\dz\)-trivial.
\end{theorem}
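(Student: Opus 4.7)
The plan is to handle both directions separately. The forward direction (that $\mz\dz$-triviality implies generation of a finite group) is immediate from what has already been recalled in Section~\ref{sec-mdreduc}: every $\mz\dz$-trivial Mealy automaton generates a finite semigroup, and for an invertible Mealy automaton the generated semigroup coincides with the generated group. So the only real content is the converse.

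For the converse, I would assume $\pres{\aut{A}}$ is finite and reduce the question to Lemma~\ref{lem-trivial}, applied not to $\aut{A}$ itself but to its tensor closure $\clot{A}$. By Remark~\ref{rem-clot}, it suffices to show that $\clot{A}$ is $\mz\dz$-trivial. First I would check that the hypotheses of Lemma~\ref{lem-trivial} are satisfied by $\clot{A}$: its stateset is the same $A$, hence has two elements; its alphabet $\Xi=\presm{\dual{\aut{A}}}$ is finite since $\aut{A}$ generates a finite semigroup and so does its dual; it is invertible-reversible because $\bar\rho$ and $\bar\delta$ are induced from the permutations $\rho_x$ and $\delta_{\mot{s}}$ by passing to equivalence classes; it is tensor closed by construction; and Proposition~\ref{prop-pas-conn} applied to $\clot{A}$ (which also generates a finite semigroup) ensures a finite connection degree. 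Lemma~\ref{lem-trivial} then gives that the two states of $\clot{A}$ are Nerode-equivalent.

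Consequently $\mz(\clot{A})$ has a single state, so its dual $\dz\mz(\clot{A})$ is a Mealy automaton over a one-letter alphabet. On a one-letter alphabet each production function is necessarily the identity, so all states of $\dz\mz(\clot{A})$ lie in the same Nerode class, and $\mz\dz\mz(\clot{A})$ collapses to a one-state automaton over a one-letter alphabet, \ie the trivial Mealy automaton. Hence $\clot{A}$ is $\mz\dz$-trivial, and by Remark~\ref{rem-clot} so is $\aut{A}$. The main delicate point in this plan is the sanity check that $\clot{A}$ actually inherits all the structural properties required by Lemma~\ref{lem-trivial}; once that is verified, the theorem is essentially a direct consequence of Proposition~\ref{prop-pas-conn}, Lemma~\ref{lem-trivial}, and Remark~\ref{rem-clot}.
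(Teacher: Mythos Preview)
Your proposal is correct and follows essentially the same route as the paper: pass to the tensor closure~$\clot{A}$, use Proposition~\ref{prop-pas-conn} to get a finite connection degree, invoke Lemma~\ref{lem-trivial} to identify the two states, and conclude $\mz\dz$-triviality, which then descends to~$\aut{A}$ via Remark~\ref{rem-clot}. You actually supply more detail than the paper does---in particular the explicit verification that $\clot{A}$ inherits invertibility and reversibility, and the spelled-out step from ``the two states are equivalent'' to ``$\clot{A}$ is $\mz\dz$-trivial''---both of which the paper leaves implicit.
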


\begin{proof}
By~\cite{AKLMP12}, if \(\aut{A}\) is \(\mz\dz\)-trivial, it generates
a finite group.

Suppose that \(\aut{A}\) generates a finite group and
consider its tensor closure \(\clot{A}\): \(\clot{A}\) generates a
finite group by Remark~\ref{rem-clot}. The connection degree of
\(\clot{A}\) is finite by Proposition~\ref{prop-pas-conn} and so
\(\clot{A}\) is \(\mz\dz\)-trivial by Lemma~\ref{lem-trivial}. Hence
\(\aut{A}\) is \(\mz\dz\)-trivial by Remark~\ref{rem-clot}.
\end{proof}

The last theorem summarizes all the decidability results arising from
this article.

\begin{theorem}
It is decidable whether a two-state invertible-reversible Mealy
automaton with alphabet \(\Sigma\) generates a finite group, in time
\({\cal O}(\Sigma\log{\Sigma})\).
It is decidable whether it generates a free semigroup, in time
\({\cal O}(\Sigma\log{\Sigma})\).

It is decidable whether a two-letter invertible-reversible Mealy
automaton with stateset \(A\) generates a finite group, in time
\({\cal O}(A\log{A})\).
\end{theorem}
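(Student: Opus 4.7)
The plan is to assemble the three claims from the main results already established: Theorem~\ref{thm-decidability} (finiteness of the group is equivalent to \(\mz\dz\)-triviality for a two-state invertible-reversible automaton), Theorem~\ref{th-main} (the finite/free dichotomy for a reversible two-state automaton), the complexity bound for the \(\mz\dz\)-reduction of a two-state automaton (\({\cal O}(\Sigma\log{\Sigma})\) from Section~\ref{sec-mdreduc}), and the duality principle that a Mealy automaton generates a finite semigroup if and only if its dual does (Section~\ref{sec-semigroups}).

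For the first statement, given a two-state invertible-reversible \(\aut{A}\) on alphabet \(\Sigma\), I decide finiteness of \(\pres{\aut{A}}\) by computing \(\mz\dz\mz\dz(\aut{A})\) and checking whether it is trivial. This suffices by Theorem~\ref{thm-decidability} and by the fact recalled in Section~\ref{sec-mdreduc} that for two-state automata the \(\mz\dz\)-reduction is bounded by the sequence \(\mz\dz\mz\dz\); since each minimization step runs in time \({\cal O}(\Sigma\log{\Sigma})\) (Hopcroft's algorithm on an automaton with two states on alphabet \(\Sigma\), or dually with stateset \(\Sigma\) on an alphabet of size two), the test takes time \({\cal O}(\Sigma\log{\Sigma})\).

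For the second statement, I invoke Theorem~\ref{th-main}: a reversible two-state Mealy automaton \(\aut{A}\) generates either a finite semigroup or a free semigroup of rank \(2\). Since \(\aut{A}\) is invertible, the generated semigroup is finite if and only if the generated group is finite (Section~\ref{sec-semigroups}). Hence \(\presm{\aut{A}}\) is free if and only if \(\pres{\aut{A}}\) is infinite, and the same \(\mz\dz\)-triviality test, run in time \({\cal O}(\Sigma\log{\Sigma})\), decides freeness.

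For the third statement, given a two-letter invertible-reversible Mealy automaton with stateset \(A\), I pass to its dual \(\dz(\aut{A})\), which is a two-state invertible-reversible Mealy automaton on the alphabet \(A\). By the cited duality, \(\aut{A}\) generates a finite semigroup if and only if \(\dz(\aut{A})\) does; by invertibility both are equivalent to generating a finite group. Applying the first statement to \(\dz(\aut{A})\) yields decidability in time \({\cal O}(A\log{A})\). No genuine obstacle remains: the proof is essentially a bookkeeping argument assembling the decidability characterization, the dichotomy, the complexity of the two-state \(\mz\dz\)-reduction, and duality.
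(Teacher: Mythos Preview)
Your assembly is correct and matches the paper's approach: the paper does not give an explicit proof of this theorem, presenting it instead as a summary of the preceding results, and your argument is precisely the intended bookkeeping---Theorem~\ref{thm-decidability} for the first claim, the dichotomy of Theorem~\ref{th-main} combined with the finite-group/finite-semigroup equivalence for the second, duality for the third, and the \({\cal O}(\Sigma\log\Sigma)\) bound on the two-state \(\mz\dz\)-reduction from Section~\ref{sec-mdreduc} for the complexity throughout.
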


Up to now, the only methods to conclude infiniteness of
automaton groups were to prove the existence
of an element of infinite order~\cite[{\sf
    FindElementOfInfiniteOrder}]{sav}\cite[{\sf SIZE\_FR}]{FR}, using
Sidki's fundamental work~\cite{sidkiconjugacy, sidki}, or to
test level transitivity~\cite[{\sf IsLevelTransitive}]{FR}. All these
methods give sufficient but not necessary conditions.

To illustrate the actual efficiency of the \(\mz\dz\)-triviality as an
algorithm to test finiteness, let us consider the 2-letter 6-state
invertible-reversible Mealy automata. Bireversible Mealy automata are
particular invertible-reversible Mealy automata and 
an invertible-reversible automaton generates a finite group only if it
is bireversible~\cite{AKLMP12}. 
 Testing the
\(\mz\dz\)-triviality of the 3446 bireversible 2-letter 6-states Mealy
automata takes 751ms\footnote{Timings obtained 
  on an Intel Xeon computer with clock   speed 2.13GHz; programs
  written in \GAP~\cite{GAP4}.}, while applying {\sf
  FindElementOfInfiniteOrder}, {\sf SIZE\_FR} or {\sf
  IsLevelTransitive} to determine the infinity of the group
generated by the particular bireversible 2-letter 6-state Mealy
automaton of Figure~\ref{fig-portrait}(a) is unsuccessful after
three weeks of computation.

\subparagraph*{Acknowledgments} I would like to thank Jean Mairesse and
Matthieu Picantin for numerous discussions around this topic.

\bibliographystyle{plain}
\bibliography{./stacs010.klimann}

\end{document}